\documentclass[a4paper,11pt]{article}
\pdfoutput=1

\usepackage{geometry}
 \geometry{
 a4paper,
 total={166mm,220mm},
 left=22mm,
 top=40mm,
 }
\usepackage[margin=10mm]{caption}

\usepackage[utf8]{inputenc}
\usepackage[english]{babel}
\usepackage{amssymb, amsmath}
\usepackage{amsthm}
\usepackage{eucal}
\usepackage{mathalpha}
\usepackage{amsfonts}
\usepackage{mathrsfs}
\usepackage{setspace}
\usepackage{graphicx}
\usepackage{subcaption}
\usepackage{slashed}
\usepackage[sorting=none, backend=biber]{biblatex}
\addbibresource{Bibliography.bib}
\usepackage{array,booktabs}
\usepackage{lmodern}
\usepackage{tikz}
\usepackage{microtype}
\usepackage{hyperref}
\usepackage{authblk}
\usepackage{yfonts}
\usepackage{csquotes}

\usepackage{tikz-cd}
\usetikzlibrary{shapes.geometric}
\usepackage{caption}
\usetikzlibrary{calc}
\def\centerarc[#1](#2)(#3:#4:#5)
    { \draw[#1] ($(#2)+({#5*cos(#3)},{#5*sin(#3)})$) arc (#3:#4:#5); }



\theoremstyle{definition}

\newtheorem{remark}{Remark}[section]

\newtheorem{prop}{Proposition}[section]
\newtheorem{lemma}{Lemma}[section]

\makeatletter
\renewenvironment{proof}[1][\proofname]{%
   \par\pushQED{\qed}\normalfont%
   \topsep6\p@\@plus6\p@\relax
   \trivlist\item[\hskip\labelsep\bfseries#1\@addpunct{.}]%
   \ignorespaces
}{%
   \popQED\endtrivlist\@endpefalse
}

\newenvironment{hproof}{%
  \renewcommand{\proofname}{Proof sketch}\proof}{\endproof}



\newcommand{\beq}{\begin{equation}}
\newcommand{\eeq}{\end{equation}}
\newcommand{\eeeem}{\end{multline}}
\newcommand{\bem}{\begin{multline}}
\newcommand{\bqa} {\begin{eqnarray}}
\newcommand{\eqa} {\end{eqnarray}}
\newcommand{\bmul}{\begin{multline}}
\newcommand{\emul}{\end{multline}}


\newcommand{\p}{\partial}

\newcommand{\eps}{\varepsilon}

\renewcommand\Re{\operatorname{Re}}
\renewcommand\Im{\operatorname{Im}}

\newcommand{\CA}{{\mathcal A}}
\newcommand{\CB}{{\mathcal B}}

\newcommand{\CH}{{\mathcal H}}

\newcommand{\CK}{{\mathcal K}}

\newcommand{\CO}{{\mathcal O}}

\newcommand{\CU}{{\mathcal U}}
\newcommand{\CV}{{\mathcal V}}
\newcommand{\CW}{{\mathcal W}}

\newcommand{\NN}{{\mathbb N}}
\newcommand{\ZZ}{{\mathbb Z}}
\newcommand{\RR}{{\mathbb R}}
\newcommand{\CCC}{{\mathbb C}}

\newcommand{\SA}{{\mathscr A}}
\newcommand{\SAl}{{\mathscr A}_{\ell}}











\def \l {\left(}
\def \r {\right)}

\def \lal {\langle}
\def \ral {\rangle}




\newcommand{\ObRepV}{{I}}
\newcommand{\voa}{{\mathsf V}}


\newcommand{\Hleg}{{\hat{\CH}}}

\newcommand{\holeO}{{\CO}}


\newcommand{\siteloc}{{z}}

\newcommand{\dLat}{{\Lambda^{\vee}}}
\newcommand{\clusterweight}{{\CW}}
\newcommand{\BoundedOps}{{\CB}}


\newcommand{\PsiNB}{{\Psi}}
\newcommand{\emb}{{\varphi}}
\newcommand{\anyonemb}{{\upsilon}}
\newcommand{\eqberry}{{{\boldsymbol\sigma}}}

\title{Chiral topologically ordered states on a lattice from vertex operator algebras}

\author{Nikita Sopenko \smallskip\\
{\it California Institute of Technology, Pasadena, CA 91125, USA}}

\date{\today}

\begin{document}

\maketitle

\begin{abstract}
We propose a class of pure states of two-dimensional lattice systems realizing topological order associated with unitary rational vertex operator algebras. We show that the states are well-defined in the thermodynamic limit and have exponential decay of correlations. The construction provides a natural way to insert anyons and compute certain topological invariants. It also gives candidates for bosonic states in non-trivial invertible phases, including the $E_8$ phase.
\end{abstract}

\section{Introduction}   \label{sec:intro}

It is believed that topological phases of matter in two dimensions can be classified by $(2+1)$-dimensional unitary topological quantum field theories (TQFT) and that the whole information about this TQFT is encoded in the entanglement structure of the ground state. In particular, for any such topological field theory, there should be a pure state of a lattice model on $\RR^2$ that has the corresponding topological order. While a precise mathematical characterization of states that are supposed to be related to TQFT and the procedure that allows to extract the TQFT data have not been given, there are certain classes of states, such as the class of invertible states introduced by A. Kitaev \cite{Kitaevlecture} or a class of gapped states\footnote{It is often assumed that topologically ordered states must be gapped ground states of local Hamiltonians. This condition is not sufficient to ensure the connection with TQFT as the classification of $d$-dimensional gapped states drastically diverges from 
that of topological field theories, at least starting from $d=3$. Neither does it seem necessary, as there might be a weaker assumption that already ensures topological properties. In this note, we concentrate on the topological properties of states that can be described without parent Hamiltonians.}, for which various topological invariants associated with TQFT have been defined.

There is a big class of non-chiral topological orders that can be realized by tensor network states defined using the Turaev-Viro construction. The string-net models introduced by M. Levin and X.-G. Wen \cite{LevinWen,LevinWenTensorNetwork,LevinWenTensorNetworVidal} provide parent commuting projector Hamiltonians for such states. The situation is very different for chiral topological order as only a few chiral interacting topological models have been solved exactly \cite{kitaev2006anyons}. One of the main features of such states is the non-existence of a boundary with short-range correlations. When the boundary modes of the system with such a ground state can be described by a conformal field theory, this obstruction manifests itself in the central charge of the chiral Virasoro algebra.

It was suggested a long time ago that the wave function of electrons for the ground state of a fractional quantum hall system is related to some vertex operator algebra \cite{moore1991nonabelions}, generalizing Laughlin's wave function. Similar ideas have been used to propose various candidates for chiral topologically ordered states of lattice spin systems \cite{kalmeyer1987equivalence, schroeter2007spin, ciracLaughlin1}. While some expected properties of such states can be checked numerically, it is difficult to verify analytically that they are representatives of the correct topological phase. To the best of our knowledge, no general proposal exists for an arbitrary unitary TQFT. Even a lattice state for the simplest non-trivial invertible bosonic phase, known as Kitaev's $E_8$ phase, has not been described explicitly.

In this note, we propose class of states of lattice systems with infinite-dimensional on-site Hilbert spaces\footnote{The entanglement of each site in the construction is finite. The states can be well approximated by replacing the infinite-dimensional spaces with finite-dimensional ones of large enough dimension.} associated with any good unitary rational vertex operator algebra. We show that the states from this class are well-defined in the thermodynamic limit and have exponentially decaying correlations. We argue that they have the topological order associated with the vertex operator algebra and propose a way to insert anyons. In particular, we obtain a lattice realization of a fractional quantum Hall state with such a topological order\footnote{For an integer quantum Hall phase, a commuting projector Hamiltonian model has been proposed in \cite{demarco2021commuting}}. We argue that our states belong to the class for which flux insertions and the topological invariant associated with the Hall conductance can be defined rigorously \cite{bachmann2019many,ThoulessHall}, and show that the latter coincides with the expected value. The usage of infinite-dimensional on-site Hilbert spaces allows us to use the full power of conformal invariance and get analytic results by relating state averages of observables of a lattice system to correlators of a certain auxiliary tensor network statistical model that was recently introduced in \cite{Runkel21}. It also allows us to map the problems of computing invariants of states and determining the properties of topological excitations to evaluations of correlation functions in CFT which can be done exactly. We emphasize that our chiral states are not tensor network states. However, we show that they can be obtained as a limit of a family of non-chiral tensor network states together with a decoupled complex conjugated copy.

The paper is organized as follows. In Section \ref{sec:schematic}, we give a schematic description of the construction. In Section \ref{sec:general}, after introducing some notation and terminology, we provide a more detailed treatment and argue that the states are topologically ordered. In particular, we discuss the case of a holomorphic vertex operator algebra that gives examples of invertible states of lattice systems. The specialization to the algebra of free fermions and free bosons is discussed in Section \ref{sec:examples}. Technical details are presented in the appendices.
\\

\noindent
{\bf Acknowledgements:}
I would like to thank Alexei Kitaev for discussions and Michael Levin for comments on the draft. I am especially grateful to Anton Kapustin for many useful suggestions and constant support. This research was supported by the U.S.\ Department of Energy, Office of Science, Office of High Energy Physics, under Award Number DE-SC0011632 and Dominic Orr Graduate Fellowship at Caltech.
\\

\noindent
{\bf Data availability statement:}
Data sharing not applicable to this article as no datasets were generated or analysed during the current study.
\\

\section{Schematic description} \label{sec:schematic}

\begin{figure}
    \centering
    \begin{tikzpicture} [x=13pt,y=13pt]
        \def\a{.4};
        \def\b{.6};
        \def\z{.25}
        \filldraw[color=teal!30] (-5-\a*5,-5) -- (-5+\a*5,5) -- (5+\a*5,5)--(5-\a*5,-5);
        \foreach \x in {-5,-3,...,5}
            \foreach \y in {-5,-3,...,5}
                {
                \filldraw [color=white, fill=white]  (\x-\b+\a*\y-\a*\b,\y-\b) [rounded corners = 6 pt] -- (\x-\b+\a*\y+\a*\b,\y+\b) [sharp corners] -- (\x+\b+\a*\y+\a*\b,\y+\b) [rounded corners = 6 pt] -- (\x+\b+\a*\y-\a*\b,\y-\b) [sharp corners] -- (\x-\b+\a*\y-\a*\b,\y-\b);
                }
        \foreach \x in {-5,-3,...,5}
            \foreach \y in {-5,-3,...,5}
                {
                \draw [color=teal!70]  (\x-\b+\a*\y-\a*\b,\y-\b) [rounded corners = 6 pt] -- (\x-\b+\a*\y+\a*\b,\y+\b) [sharp corners] -- (\x+\b+\a*\y+\a*\b,\y+\b);
                \draw [color=teal!70](\x+\b+\a*\y+\a*\b+\a*\z,\y+\b+\z) [rounded corners = 6 pt] -- (\x+\b+\a*\y-\a*\b,\y-\b) [sharp corners] -- (\x-\b+\a*\y-\a*\b-\z,\y-\b);     
                }
        \filldraw [color=white, fill=white] (-6-6*\a,-6)--(-6-5*\a,-5)--(6-5*\a,-5)--(6-6*\a,-6);
        \filldraw [color=white, fill=white] (-6+6*\a,+6)--(-6+5*\a,+5)--(6+5*\a,+5)--(6+6*\a,+6);
        \filldraw [color=white, fill=white] (6-6*\a,-6)--(6+6*\a,6)--(5+6*\a,6)--(5-6*\a,-6);    
        \filldraw [color=white, fill=white] (-6-6*\a,-6)--(-6+6*\a,6)--(-5+6*\a,6)--(-5-6*\a,-6);            
        \foreach \x in {-4,-2,...,4}
            {
            \filldraw[color = teal!30, fill=teal!30] (\x+5*\a,+5) ellipse (0.40 and 0.2);
            }        
        \foreach \x in {-4,-2,...,4}
            {
            \filldraw[fill=teal!30] (\x-5*\a,-5) ellipse (0.40 and 0.2);
            }
        \foreach \y in {-4,-2,...,4}
            {
            \filldraw[color=teal!30,rotate around={65:(-5+\y*\a,\y)}] (-5+\y*\a,\y) + (0:0.40 and 0.2) arc   (0:360:0.40 and 0.2);
            }   
        \foreach \y in {-4,-2,...,4}
            {
            \filldraw[color=black, fill = teal!30,rotate around={65:(5+\y*\a,\y)}] (5+\y*\a,\y) + (0:0.40 and 0.2) arc   (0:360:0.40 and 0.2);
            }               
        \draw[color=violet,densely dashed] (0-1*\a,-1) + (0:0.40 and 0.2) arc   (0:180:0.40 and 0.2); 
        \draw[color=violet,densely dashed] (0+1*\a,+1) + (0:0.40 and 0.2) arc   (0:180:0.40 and 0.2);
        \draw[color=violet,rotate around={65:(-1,0)},densely dashed] (-1,0) + (0:0.40 and 0.2) arc   (0:180:0.40 and 0.2);
        \draw[color=violet,rotate around={65:(1,0)},densely dashed] (1,0) + (0:0.40 and 0.2) arc   (0:180:0.40 and 0.2);   
    \end{tikzpicture}
    \begin{tikzpicture} [x=13pt,y=13pt]
        \def\a{.4};
        \def\b{.6};
        \def\z{.25}
        \filldraw[color=teal!30] (-5-\a*5,-5) -- (-5+\a*5,5) -- (5+\a*5,5)--(5-\a*5,-5);
        \foreach \x in {-5,-3,...,5}
            \foreach \y in {-5,-3,...,5}
                {
                \filldraw [color=white, fill=white]  (\x-\b+\a*\y-\a*\b,\y-\b) [rounded corners = 6 pt] -- (\x-\b+\a*\y+\a*\b,\y+\b) [sharp corners] -- (\x+\b+\a*\y+\a*\b,\y+\b) [rounded corners = 6 pt] -- (\x+\b+\a*\y-\a*\b,\y-\b) [sharp corners] -- (\x-\b+\a*\y-\a*\b,\y-\b);
                }
        \foreach \x in {-5,-3,...,5}
            \foreach \y in {-5,-3,...,5}
                {
                \draw [color=teal!70]  (\x-\b+\a*\y-\a*\b,\y-\b) [rounded corners = 6 pt] -- (\x-\b+\a*\y+\a*\b,\y+\b) [sharp corners] -- (\x+\b+\a*\y+\a*\b,\y+\b);
                \draw [color=teal!70](\x+\b+\a*\y+\a*\b+\a*\z,\y+\b+\z) [rounded corners = 6 pt] -- (\x+\b+\a*\y-\a*\b,\y-\b) [sharp corners] -- (\x-\b+\a*\y-\a*\b-\z,\y-\b);     
                }
        \filldraw [color=white, fill=white] (-6-6*\a,-6)--(-6-5*\a,-5)--(6-5*\a,-5)--(6-6*\a,-6);
        \filldraw [color=white, fill=white] (-6+6*\a,+6)--(-6+5*\a,+5)--(6+5*\a,+5)--(6+6*\a,+6);
        \filldraw [color=white, fill=white] (6-6*\a,-6)--(6+6*\a,6)--(5+6*\a,6)--(5-6*\a,-6);    
        \filldraw [color=white, fill=white] (-6-6*\a,-6)--(-6+6*\a,6)--(-5+6*\a,6)--(-5-6*\a,-6);            
        \foreach \x in {-4,-2,...,4}
            {
            \filldraw[color = teal!30, fill=teal!30] (\x+5*\a,+5) ellipse (0.40 and 0.2);
            }        
        \foreach \x in {-4,-2,...,4}
            {
            \filldraw[fill=teal!30] (\x-5*\a,-5) ellipse (0.40 and 0.2);
            }
        \foreach \y in {-4,-2,...,4}
            {
            \filldraw[color=teal!30,rotate around={65:(-5+\y*\a,\y)}] (-5+\y*\a,\y) + (0:0.40 and 0.2) arc   (0:360:0.40 and 0.2);
            }   
        \foreach \y in {-4,-2,...,4}
            {
            \filldraw[color=black, fill = teal!30,rotate around={65:(5+\y*\a,\y)}] (5+\y*\a,\y) + (0:0.40 and 0.2) arc   (0:360:0.40 and 0.2);
            }               
        \draw[color=violet,densely dashed] (0-1*\a,-1) + (0:0.40 and 0.2) arc   (0:180:0.40 and 0.2); 
        \draw[color=violet,densely dashed] (0+1*\a,+1) + (0:0.40 and 0.2) arc   (0:180:0.40 and 0.2);
        \draw[color=violet,rotate around={65:(-1,0)},densely dashed] (-1,0) + (0:0.40 and 0.2) arc   (0:180:0.40 and 0.2);
        \draw[color=violet,rotate around={65:(1,0)},densely dashed] (1,0) + (0:0.40 and 0.2) arc   (0:180:0.40 and 0.2);

        \foreach \x\y in {-3/-3,3/3}
                {
                \def\sc{0.1}
                \draw [color=red, thick]  (\x-\b+\a*\y-\a*\b-\sc,\y-\b) [rounded corners = 6 pt] -- (\x-\b+\a*\y+\a*\sc+\a*\b-\sc,\y+\b+\sc) [sharp corners] -- (\x+\b+\a*\y+\a*\sc+\a*\b,\y+\b+\sc);
                }            
        \node[] at (-3-3*\a,-3) {$A$};
        \node[] at (3+3*\a,3) {$B$};        
    \end{tikzpicture}
    \caption{Left picture: The (internal part of the) Riemann surface obtained by gluing two copies of the plane with holes. The elementary block is depicted in the center of the picture. Right picture: The average of two on-site observables $\lal\CO_A \CO_B \ral_{\Psi_{\Gamma}}$ of the lattice system is given by the partition function on the Riemann surface with insertions at the corresponding holes divided by the partition function without the insertions.}
    \label{fig:Network}
\end{figure}
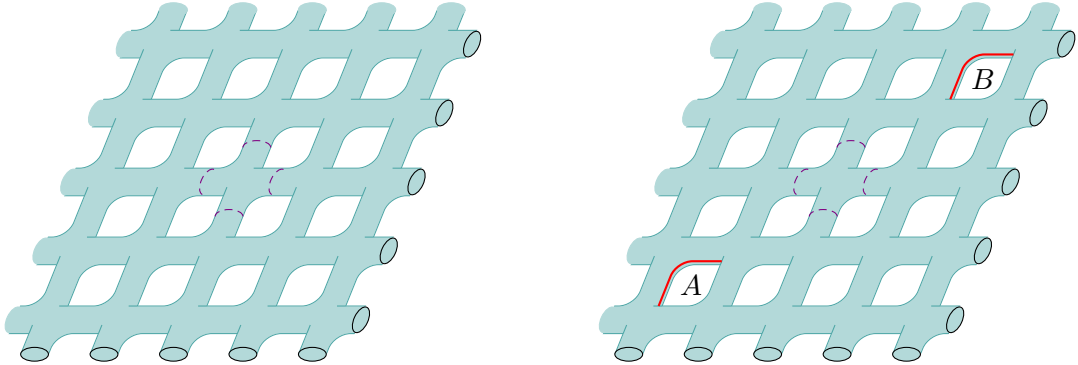

Suppose we have a finite square lattice $\Gamma$ on the complex plane. To each site $j \in \Gamma$, we assign an infinite-dimensional Hilbert space $\CV_j$ isomorphic to the Hilbert space $\CV$ of the vacuum module of a good\footnote{By good, we mean a unitary rational vertex operator algebra satisfying conditions from \cite{HuangRVOA} guaranteeing that the category of representations of such an algebra is modular.} unitary rational vertex operator algebra $\voa$. We imagine a disk-like hole at each site $j \in \Gamma$ and interpret $\CV_j$ as the Hilbert space for chiral modes living on the boundary of the hole. The vertex operator algebra defines the evaluation map $\lal \Psi_{\Gamma}| : \bigotimes_{j \in \Gamma} \CV_j \to \CCC$, that gives the amplitude for the Riemann surface with fixed state vectors on the boundary. More explicitly, if $|e_{m_{\Gamma}} \ral = \bigotimes_{j \in \Gamma} |e_{m_j}\ral$ for orthonormal basis vectors $|e_{m_j}\ral \in \CV_j$, then
\beq
\lal \Psi_{\Gamma}|e_{m_{\Gamma}} \ral = \lal\prod_{j \in \Gamma} \CO_{\emb_j}(e_{m_j})\ral
\eeq
where $\CO_{\emb_j}(e_{m_j})$ is the vertex operator producing $|e_{m_j}\ral$ on the boundary of the hole at $j \in \Gamma$ (which we define explicitly below). This map gives a state vector $|\Psi_{\Gamma}\ral \in \bigotimes_{j \in \Gamma} \CV_j$ for our lattice system on $\Gamma$. 

The norm $\lal \Psi_{\Gamma}|\Psi_{\Gamma}\ral$ is given by the partition function on the Riemann surface, obtained by gluing two copies of the plane with holes along the boundary, in the appropriate sector (see Fig. \ref{fig:Network}). The sector is specified uniquely by the requirement that only the states of the trivial module of $\voa$ are running through the tubes connecting the holes. We can cut the surface into elementary blocks, as shown in the picture, and represent $\lal \Psi_{\Gamma}|\Psi_{\Gamma}\ral$ by the partition function of a tensor network statistical model (together with some boundary conditions), with tensors defined by the amplitudes of elementary blocks for various choices of state vectors on its four boundaries. The averages $\lal \CO \ral_{\Psi_{\Gamma}} = \lal \Psi_{\Gamma}|\CO|\Psi_{\Gamma}\ral / \lal \Psi_{\Gamma}|\Psi_{\Gamma}\ral$ of bounded local observables of the lattice system can be expressed through the averages of products of on-site observables. The average $\lal \CO_{j_1} ... \CO_{j_n} \ral_{\Psi_{\Gamma}}$ for on-site observables $\{\CO_{j_a}\}_{a = 1,...,n}$, $j_a \in \Gamma$ is computed by the normalized partition function on the same Riemann surface with proper insertions along the boundaries of the holes corresponding to $j_a$ (see Fig. \ref{fig:Network}). When the size of the holes is large, the necks of the elementary blocks are thin, and the contribution of states with non-zero conformal weight running through the necks is highly suppressed. This allows us to treat the model analytically if the holes are large enough and show the existence of the thermodynamic limit. It also allows us to show that the correlations between local lattice observables in the bulk (i.e. localized far away from the boundary) decay exponentially with the distance between their supports. The correlation length is defined by the size of the holes (or the thickness of the necks). Note that if it were not for the holes all around the observables, the correlations would decay only polynomially. In particular, they do decay polynomially for observables near the boundary of the lattice. For bulk observables, the holes effectively "screen" the correlations.

While $|\Psi_{\Gamma}\ral$ depends on the positions and the shapes of the holes, we will argue that various choices give states in the same topological phase (i.e. they are related by a local Hamiltonian evolution \cite{hastings2005quasiadiabatic}) as long as the holes homogeneously fill the plane, so that there is a uniform bound on the thickness of the necks in the bulk. In particular, states with additional holes can be produced by entangling with ancillas using almost local unitaries.

Before describing the construction in more detail, let us illustrate some properties of the state for $\voa$ being the algebra of a free periodic chiral boson with the $U(1)$ current algebra at level $k = 1/m \in 1/(2 \ZZ)$ as a subalgebra generated by $J(z) = i \p \phi$. The lattice system has the induced on-site $U(1)$ action, and the state produced by the construction is $U(1)$-invariant. The action of an on-site charge operator $Q_j$ on $|\Psi_{\Gamma}\ral$ is characterized by
\beq
\lal \Psi_{\Gamma} | Q_{j} | e_{m_{\Gamma}}\ral = \lal \l \oint_{C_j} \frac{dz}{2 \pi i} J(z) \r \prod_{j \in \Gamma} \holeO_{\emb_j}(e_{m_j}) \ral
\eeq
where $C_j$ is a loop around the hole at site $j \in \Gamma$. For a region $A$, the action of the operator $Q_A = \sum_{j \in A} Q_j$ of the $U(1)$ charge inside this region corresponds to the insertion of the integrals of the current $J(z)$ along $C_j$ for each $j \in A$ and therefore (see fig. \ref{fig:u1action})
\beq
\lal \Psi_{\Gamma} | Q_{A} | e_{m_{\Gamma}}\ral = \lal \l \oint_{\p A} \frac{dz}{2 \pi i} J(z) \r \prod_{j \in \Gamma} \holeO_{\emb_j}(e_{m_j}) \ral.
\eeq
As we argue below, the perturbation of the state $\Psi_{\Gamma}$ by $Q_A$ can be performed by the action of a Hamiltonian $K_{A \bar{A}}$ which is a sum of bounded almost local terms localized near the boundary of $A$. States having this property are known to possess a topological invariant\footnote{Strictly speaking, the invariant is only defined in the thermodynamic limit. But it in this section we give an informal treatment.} \cite{bachmann2019many,ThoulessHall} defined by $\eqberry := 2 \pi i \lal[K_{X\bar{X}}, K_{Y \bar{Y}}]\ral$, where $X$ and $Y$ can be chosen to be the right and the upper half-planes. It does not depend on the position of $X$ and $Y$, and when the state is the ground state of some gapped Hamiltonian, it coincides with the zero-temperature Hall conductance. The topological invariant for $\Psi_{\Gamma}$ is given by
\beq
\eqberry = 2 \pi i \left[ \int_{\text{Im} z = 0} \frac{dz}{2 \pi i}, \int_{\text{Re} w = 0} \frac{d w}{2 \pi i} \right] \lal J(z) J(w) \ral
\eeq
where the average is taken on a doubled Riemann surface in the same sector as in the computation of $\lal \Psi_{\Gamma}|\Psi_{\Gamma}\ral$. Since only the singular part of the operator product expansion $J(z) J(w) = k (z-w)^{-2} + ...$  contributes to the integral, we have $\eqberry = k$.

It is easy to implement the insertion of anyons into the construction. The corresponding states can be expressed in terms of the correlation functions of vertex operators in non-trivial modules of $\voa$. In the present case, anyons are labeled by $p=0,...,m-1$. Schematically, for a single anyon we have
\beq
\lal \Xi^{(p)}_{\Gamma}| e_{m_\Gamma} \ral \propto \lal e^{- i p \phi(\infty)} \l\prod_{j \in \Gamma} \holeO_{\emb_j}(e_{m_j}) \r e^{i p \phi(z)} \ral
\eeq
where $p$ and $z$ are the label and the position of the anyon, and $e^{- i p \phi(\infty)}$ is inserted to satisfy the superselection rules. The charge of the anyon can be easily computed as the action of the total charge operator on $| \Xi^{(p)}_{\Gamma} \ral$ differs from the action on $|\Psi_{\Gamma}\ral$ by the insertion of $\oint J(z) \frac{dz}{2 \pi i}$ along a small loop around $z$ that simply multiplies $|\Xi^{(p)}_{\Gamma}\ral$ by $p/m$. In the same way, we can produce states with anyons for a general vertex algebra.

\section{General construction}  \label{sec:general}

To formalize the description of the previous section, we first introduce some notation and terminology.

\subsection{Preliminaries}   

\subsubsection{Vertex operator algebra}

Let $\voa$ be a good unitary rational vertex operator algebra with the vacuum vector $|0\ral \in \voa$ and vertex operators $Y(\cdot,z):\voa \to \text{End}\,\voa[[z^{\pm1}]]$. The energy-momentum tensor (the vertex operator for the conformal element) is denoted $T(z) = \sum_{n \in \ZZ} L_n z^{-n-2}$ with the central charge $c$. Let $\text{Rep}(\voa)$ be the category of representations of $\voa$, which has the structure of a unitary modular tensor category. We denote the finite set of simple objects of $\text{Rep}(\voa)$ by $\ObRepV$ with the trivial object denoted $a={\bf 1}$ and with the dual label being $\bar{a}$. We let $\voa^{(a)}$ be the corresponding modules, and let $\CV^{(a)}$ be the corresponding Hilbert spaces. We choose an orthonormal basis $e^{(a)}_m$ in $\voa^{(a)}$ and the corresponding basis $|e^{(a)}_m \ral$ in $\CV^{(a)}$ labeled by integers $m \in \NN_0$ such that $|e^{(a)}_m \ral$ are eigenvectors of $L_0$ with the weight $h^{(a)}(m)$ and $h^{(a)}(m) \leq h^{(a)}(m')$ if $m < m'$. When we omit the label $a$, we mean $a = \bf{1}$.

By the state-operator correspondence, any vector $|e_{m} \ral \in \CV$ on the boundary of the unit disk $D$ can be obtained by the insertion of $Y(e_m,0)$. More generally, if we have a holomorphic embedding of the unit disk $\emb:D \to \CCC$, in order to obtain a state $|e_m\ral$ on the boundary of the closure of the image, we need to insert (see e.g. Section 5.4 of \cite{FrenkelBenZvi})
\beq
\holeO_{\emb}(e_m) : = Y(R_{\emb} e_m, \emb(0))
\eeq
where
\beq \label{eq:Rphi}
R_\emb = v_0^{L_0} \exp \biggl( \sum_{m=1}^{\infty} v_m L_m \biggr) = \exp \biggl( \sum_{m=1}^{\infty} \frac{v_m}{v_0^m} L_m \biggr) v_0^{L_0}
\eeq
and $v_n \in \CCC$ are given by the solution of
\beq
v_0 \exp \biggl( \sum_{m=1}^{\infty} v_m z^{m+1} \p_z \biggr) z = \emb(z) - \emb(0).
\eeq
We also introduce $\lal \alpha, \emb|$ as the covector that is parallel to $\lal0|\holeO_\emb(\alpha)^{\dagger}$ and normalized such that $\lal \alpha,\emb |\holeO_\emb(\alpha) |0\ral = 1$.

\subsubsection{Evaluation maps} \label{ssec:evalmaps}

Let $\Gamma$ be a finite set, and let $\CV_{\Gamma} := \bigotimes_{j \in \Gamma} \CV_j$ where $\CV_j \cong \CV$. Let $\emb_{\Gamma}:= \{\emb_j\}_{j \in \Gamma}$ be a collection of holomorphic embeddings $\emb_j: D \to \CCC$ of open unit disks with disjoint images, which closures we denote by $B_j$. The vector $|\Psi_{\Gamma} \ral \in \CV_{\Gamma}$ is defined by 
\beq \label{eq:chiralstate1}
\lal \Psi_{\Gamma} |e_{m_\Gamma}\ral = \lal \prod_{j \in \Gamma} \holeO_{\emb_j}(e_{m_j}) \ral
\eeq
where $|e_{m_\Gamma} \ral := \bigotimes_{j \in \Gamma} |e_{m_j}\ral \in \CV_{\Gamma}$. This vector is well-defined since $\lal \Psi_{\Gamma}| \Psi_{\Gamma} \ral$ is given by the partition function on the Riemann surface obtained by gluing two copies of $\CCC\backslash \{B_j\}_{j \in \Gamma}$ along the boundaries of $B_j$ (in the sector with only the states of the trivial module running through the tubes connecting the holes) and therefore is finite. We denote the corresponding pure state on the algebra $\BoundedOps(\CV_{\Gamma})$ of bounded operators on $\CV_{\Gamma}$ by $\Psi_{\Gamma}:\BoundedOps(\CV_{\Gamma}) \to \CCC$. We can interpret $\CV_{\Gamma}$ as the Hilbert space for chiral modes living on the boundaries of $B_j$.

Let $\Gamma_{*}$ be a pointed set $\Gamma \sqcup \{*\}$ obtained by adjoining a new element $*$, and let $\CV_{\Gamma_*} := \CV \otimes \CV_{\Gamma}$. If, in addition to the data from the previous paragraph, we have a holomorphic embedding $\emb_*:D \to \CCC$ such that its image contains all $\{B_j\}_{j \in \Gamma}$, we can define a vector $|\Psi_{\Gamma_*}\ral \in \CV_{\Gamma_*}$ by
\beq \label{eq:chiralstate2}
\lal \Psi_{\Gamma_*} |e_{m_*} e_{m_\Gamma} \ral = \lal e_{m_*}, \emb_*| \prod_{j \in \Gamma} \holeO_{\emb_j}(e_{m_j})|0 \ral.
\eeq
This vector defines a natural pure state $\Psi_{\Gamma_*}$ of modes on the boundary of $\Sigma$ that is the closure of $(\Im \emb_{*}) \backslash \l \bigcup_{j \in \Gamma} B_j \r$.

\subsubsection{Conformal field theory}

When we say CFT, we always mean the diagonal (or Cardy case \cite{fuchs2002tft}) conformal field theory associated with $\voa$. We denote the anti-holomorphic sector by $\overline{\voa}$. The Hilbert space on a circle is $\CH^{\text{CFT}} = \bigoplus_{a \in \ObRepV} \l \CV^{(a)} \otimes \overline{\CV}^{(a)} \r$. The topological line defects and the elementary conformal boundary conditions \cite{cardy1989boundary} for the diagonal CFT are labeled by elements of $\ObRepV$. We denote the Hilbert space of states on the interval with the left boundary condition $a$ and the right boundary condition $b$ by $\CH^{(ab)}$. We choose a basis $|e^{(ab)}_m \ral$, $m \in \NN_0$ ordered by weights $h^{(ab)}(m)$. We also use $\Hleg := \bigoplus_{a,b \in I} \CH^{(ab)}$.

An elementary conformal boundary condition on the outer part of the disk of radius $r<1$ corresponds to the state on the circle
\beq
|a \ral = r^{-c/6} \sum_{b \in \ObRepV} \frac{S_{ab}}{\sqrt{S_{{\bf 1} b}}} \sum_{m = 0}^{\infty} r^{2h^{(b)}(m)} |e^{(b)}_m \ral \otimes |\overline{e^{(b)}_m} \ral \in \CH^{\text{CFT}}
\eeq
where $r^{-c/6}$ is the Liouville factor for the disk, and $S_{ab}$ are the components of the $S$ matrix. Their linear combinations
\beq
|b \ral\!\ral = (S_{{\bf 1}b})^{1/2} \sum_{a \in \ObRepV} S^*_{ba} |a\ral = r^{-c/6} \sum_{m = 0}^{\infty} r^{2h^{(b)}(m)} |e^{(b)}_m \ral \otimes |\overline{e^{(b)}_m} \ral
\eeq
are called the {\it Ishibashi states}. We extensively use the linear combination of elementary boundary conditions (or topological line defects) that corresponds to the Ishibashi state of the vacuum
\beq
|{\bf 1} \ral\!\ral = (S_{{\bf 1 1}})^{1/2} \sum_{a \in \ObRepV} S_{{\bf 1}a} |a\ral = r^{-c/6} \sum_{m = 0}^{\infty} r^{2h(m)} |e_m \ral \otimes |\overline{e_m} \ral
\eeq
We call it  {\it cloaking linear combination} following \cite{Runkel21}, where its various properties are discussed. In particular, any topological line defect can freely pass through a hole with the cloaking linear combination of boundary conditions without changing the correlation functions
\beq
    \begin{tikzpicture}[baseline={([yshift=-.5ex]current bounding box.center)}]
        \filldraw[color = teal!30] (0,0) circle (1);
        \filldraw[color = blue, fill = white] (0,0) circle (.5);
        \draw[color = violet, thick] (-1,0) .. controls (-0.5,0) and (-0.5,0.65) .. (0,0.65) .. controls (0.5,0.65) and (0.5,0) .. (1,0);    
    \end{tikzpicture}
    =
    \begin{tikzpicture}[baseline={([yshift=-.5ex]current bounding box.center)}]
        \filldraw[color = teal!30] (0,0) circle (1);
        \filldraw[color = blue, fill = white] (0,0) circle (.5);
        \draw[color = violet, thick] (-1,0) .. controls (-0.5,0) and (-0.5,-0.65) .. (0,-0.65) .. controls (0.5,-0.65) and (0.5,0) .. (1,0);            
    \end{tikzpicture}
    .
\eeq

The discussion about the state-operator correspondence from the previous section can be generalized to bulk and boundary field insertions of the diagonal CFT. For a holomorphic embedding $f:D \to \CCC$ of the unit disk, we denote the bulk field insertion at $f(0)$ that produces $|\alpha\ral \otimes |\bar{\alpha}\ral \in \CV^{(a)} \otimes \overline{\CV}^{(a)}$ on the boundary of the image by $\holeO^{(a)}_{f}(\alpha \otimes \bar {\alpha})$. For $f$ that maps the interval $[-1,0]$ to the boundary with the elementary boundary condition $a$, the interval $[0,1]$ to the boundary with the elementary boundary condition $b$ and the upper half-disk to the bulk, the boundary field insertion at $f(0)$ that produces an open state vector $|\alpha\ral \in \CH^{(ab)}$ on the image of the upper half-circle is denoted $\holeO^{(ab)}_{f}(\alpha)$.

\subsection{Chiral state}
\label{ssec:chiralstate}

Let $\Lambda$ be the two-dimensional lattice which we identify with $\ZZ^2 \subset \CCC$. The elements of the lattice are called sites, while the segments connecting two neighboring sites are called edges. The dual lattice is denoted $\Lambda^{\vee}$. The location of $j\in \Lambda$ or $j \in \Lambda^{\vee}$ on the complex plane is denoted $\siteloc_j \in \CCC$. For a finite subset $\Gamma \subset \Lambda$ we let the algebra of observables of the finite lattice system on $\Gamma$ be $\SA_{\Gamma} := \bigotimes_{j \in \Gamma} \CB(\CV_{j})$. When $\voa$ is fermionic, we use the graded tensor product. In the thermodynamic limit, one can define the algebra of quasi-local observables $\SA$ on $\Lambda$, which is the norm completion of the algebra of local observables $\SAl$ defined by the canonical direct limit $\SAl := \underset{\Gamma}\varinjlim\, \SA_{\Gamma}$.

We consider $\Gamma \subset \Lambda$ that are the sets of sites inside the square $\{z \in \CCC: |\Re(z)|, |\Im(z)| < N+1/2\}$ for some $N \in \NN$. As explained in Section \ref{ssec:evalmaps}, to each such subset we can associate vectors $|\PsiNB_{\Gamma}\ral \in \CV_{\Gamma}$ and $|\Psi_{\Gamma_*} \ral \in \CV_{\Gamma_*}$ if we choose a collection of holomorphic embeddings $\emb_{\Gamma_*} := \{\emb_j\}_{j \in \Gamma_*}$ of the unit disk. To fix this data, we choose a function $f(z)$ that maps the unit disk into the interior of the square with vertices at $(\pm1 \pm i) \eps$ for $0<\eps<1/2$ and let $\emb_j(z):=f(z)+\siteloc_j$. We also fix some $\emb_*$ such that $\Im \emb_*$ is given by the square from the first sentence of this paragraph. The resulting surface $\Sigma$ is the square $\Im \emb_*$ with a collection of holes at each site of $\Gamma$ (see Fig. \ref{fig:sigmatau}).

In the following, for convenience, we choose a particular one-parameter family of functions $f(z)$ defined in eq. (\ref{eq:holeshape}) with the parameter $\tau > 0$ that characterizes the size of the holes. When $\tau \to 0$, the holes almost touch each other, while when $\tau$ is large, the holes are very small. To emphasize the dependence on $\tau$, we write $\Sigma_{\tau}$, $\Psi_{\tau,\Gamma}$ and $\Psi_{\tau,\Gamma_*}$

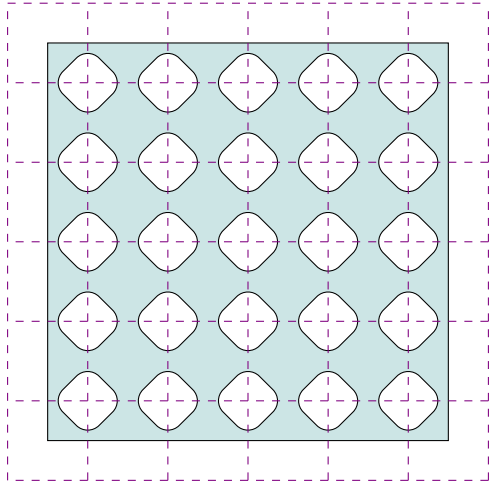
\begin{figure}
    \centering
    \begin{tikzpicture} [x=15pt,y=15pt]
        \draw [black, fill=teal, fill opacity = 0.2] (-5,-5) rectangle (5,5);
        \draw [violet, dashed] (-6,-6) -- (-6,6);
        \draw [violet, dashed] (-6,6) -- (6,6);
        \foreach \x in {-4,-2,...,4}
            \foreach \y in {-4,-2,...,4}
                {
                \draw [black, fill=white, rounded corners=6pt, rotate around={45:(\x,\y)}]  (\x-.9/1.41,\y-.9/1.41) rectangle (\x +.9/1.41,\y +.9/1.41);
                }
        \foreach \x in {-4,-2,...,6}
            \foreach \y in {-6,-4,...,4}
                {
                \draw [violet, dashed] (\x,\y) -- (\x-2,\y);
                \draw [violet, dashed] (\x,\y) -- (\x,\y+2);        
                }
    \end{tikzpicture}
    \caption{The surface $\Sigma_{\tau}$ and its partition into elementary blocks.}
    \label{fig:sigmatau}
\end{figure}

The norm $\lal \Psi_{\tau,\Gamma_*}|\Psi_{\tau,\Gamma_*} \ral$ is given by the partition function on the Riemann surface obtained by gluing two copies of $\Sigma_{\tau}$ in the sector with only the states of the trivial module running through the tubes connecting the holes. Alternatively, it is given by the partition function of CFT on $\Sigma_{\tau}$ with the cloaking linear combinations of elementary boundary conditions. It can be computed by cutting the resulting surface along the edges of $\Lambda$ and gluing using the usual rules (see Fig. \ref{fig:sigmatau}). Such a decomposition defines a tensor network statistical model that is similar to the one introduced in \cite{Runkel21}. Note that when the holes are large, the necks of the elementary blocks are very thin, and therefore the states with non-zero weight running through the necks are suppressed.

We claim that the direct limit of the restriction of the state $\Psi_{\tau,\Gamma_*}$ to $\SA_{\Gamma}$ over $\Gamma \subset \Lambda$ exists and defines a pure state $\Psi_{\tau}$ on $\SA$. In Appendix \ref{app:clusterCFT}, we argue that it is true at least for sufficiently small but finite $\tau$ using the cluster expansion, though we believe that it is true for any $\tau$. We also show that the state $\Psi_{\tau}$ has exponential decay of correlations of local observables.

\begin{remark}
While we have fixed the lattice to be a square lattice and the shape of the holes, we believe that construction works for any (not necessarily ordered) lattice and holes around sites that homogeneously fill the plane so that the widths of the necks of all elementary blocks are uniformly bounded. In particular, for the cluster expansion to work, we only need all the holes to be relatively close to each other so that all the necks are sufficiently thin.
\end{remark}

\begin{remark} \label{rmk:nodisentangling}
We can make the correlation length arbitrarily small by choosing a small enough value of $\tau$. When $\tau$ grows, the holes shrink, and each site becomes more and more disentangled. However, the correlation length grows with $\tau$ since the necks of the elementary blocks become thicker and the states running through the necks are becoming less and less suppressed. Therefore, decreasing the size of the holes homogeneously does not allow to disentangle the sites without destroying the locality.
\end{remark}

\begin{remark}
The states $\Psi_{\tau,\Gamma}$ and $\Psi_{\tau, \Gamma_*}$ almost coincide in the bulk, and therefore we can also get $\Psi_{\tau}$ by taking the thermodynamic limit of $\PsiNB_{\tau,\Gamma}$. The advantage of working with $\Psi_{\tau,\Gamma_*}$ over $\PsiNB_{\tau,\Gamma}$ is that the former has decaying correlations for any two distant observables of $\SA_{\Gamma}$, while the latter has relatively large correlations between distant observables near the boundary (as the decay of such correlations is only polynomial). The entanglement with the modes living in the auxiliary factor $\CV$ in $\CV_{\Gamma_*}$ effectively localizes the boundary correlations.
\end{remark}

\begin{remark}
One can analogously define states on a finite lattice modeling the system on an arbitrary Riemann surface. In this case, one generally has a non-trivial space of conformal blocks, and each basis element gives its own state. 
\end{remark}

In the remainder of the paper, we give arguments in support of the fact that $\Psi_{\tau}$ has the topological order associated with the vertex algebra $\voa$.

\subsubsection{Local perturbations} \label{ssec:perturb}

One natural way to modify the state $\Psi_{\tau}$ locally is to insert a vertex operator into the defining correlator eq.(\ref{eq:chiralstate2}) for $\Psi_{\Gamma_*}$ (such that the resulting vector in $\CV_{\Gamma_*}$ is non-zero) and take the thermodynamic limit. Such an insertion corresponds to a change of an element of the tensor network statistical model that computes the averages of observables in the state $\Psi_{\tau}$. The analysis from Appendix \ref{app:clusterCFT} implies that the change of the expectation values of observables outside a large disk around the insertion exponentially decays with the size of the disk. Using this, one can argue that the modified state can be obtained from $\Psi_{\tau}$ by an almost local\footnote{By almost local, we mean that the observable can be approximated by a local one with the error (in the operator norm) that decays rapidly (faster than any power) with the diameter of the support (see \cite{LocalNoether} for a precise characterization).} unitary observable $\CU$ (see Appendix \ref{app:localpert}), i.e. the average of $\CA \in \SA$ in the modified state is given by $\lal \CU^* \CA \CU\ral_{\Psi_{\tau}}$. 

A local variation of the moduli of the Riemann surface $\Sigma_{\tau}$ (e.g. changing the size of a hole) corresponds to the insertion of a certain combination of vertex operators from the Virasoro subalgebra, and therefore can also be performed by an almost local unitary observable. The localization of the observable (i.e. the rate of decay of the "tails") depends on the correlation length. Suppose a global variation of the moduli is composed of local changes which keep the correlation length bounded. Using unitaries for local changes, one can construct a finite time Hamiltonian evolution, with the Hamiltonian being a sum of almost local terms, that performs the global change. Therefore, states related by such a global variation are in the same topological phase. In particular, the states $\Psi_{\tau}$ are in the same phase for different values of $\tau$ as long as the correlation length stays finite.

One can also modify the state by entangling it with ancillas. If we describe the initial ancillary state as a pure state of the trivial module living on the boundary of a decoupled disk, one can create a "wormhole" between this disk and $\Sigma_{\tau}$ (i.e. cutting a small disk out of each surface and gluing the boundaries) by inserting a certain combination of vertex operators as only the states of the trivial module are running through this wormhole. Hence, the modified state can be obtained from the original one by an almost local unitary which entangles the sites of the lattice with the ancilla. By inverting the process, we can disentangle any single site of the lattice system. Note, however, that in this way, we can not disentangle all the spins of the system keeping the correlation length finite (see Remark \ref{rmk:nodisentangling}).

\subsubsection{Anyons} \label{ssec:anyons}


One can also modify the state $\Psi_{\tau}$ by inserting non-trivial modules of $\voa$. A choice of a holomorphic embedding of the open unit disk $\anyonemb:D \to \Sigma_{\tau}$ and a vector $\alpha^{(a)} \in \voa^{(a)}$ gives the map $\CV^{(a)} \otimes \CV_{\Gamma} \to \CCC$ that evaluates the amplitude on the Riemann surface with fixed state vectors $|\alpha^{(a)}\ral \in \CV^{(a)}$, $|e^{(a)}_{m_*}\ral \in \CV^{(a)}$, $\{|e_{m_j}\ral \in \CV_j\}_{j \in \Gamma}$ on the boundaries of the closures of the images of $\anyonemb$, $\emb_*$, $\{\emb_j\}_{j \in \Gamma}$, respectively. Therefore $\anyonemb$ and $\alpha^{(a)}$ define $|\Xi_{\tau, \Gamma_*}^{(a)} \ral  \in \CV^{(a)}_{\Gamma_*} := \CV^{(a)} \otimes  \CV_{\Gamma}$. We interpret the thermodynamic limit $\Xi_{\tau}^{(a)}$ of the corresponding state on $\SA_{\Gamma}$ as a state of the anyon of type $a$ localized on sites in the vicinity of the image of $\anyonemb$. More generally, one can insert several anyons by choosing a collection of embeddings of open unit disks into $\Sigma_{\tau}$, a collection of elements of the modules, and an element of the corresponding space of conformal blocks. The determination of braiding properties is reduced to the analysis of the correlation functions of CFT.

By the arguments from the previous subsection, one can move anyons (or create particle-antiparticle pairs) on the lattice using a Hamiltonian evolution, with the Hamiltonian being a sum of almost local terms localized along the path. In particular, a non-trivial anyon inside a large ring can be detected by an observable of the algebra that performs a transport of an anyon with non-trivial braiding around the ring. It implies that $\Xi_{\tau}^{(a)}$ is not related to $\Psi_{\tau}$ by an almost local unitary observable.

\subsection{Doubled state and invertibility} \label{sec:double}

While the averages in the state $\Psi_{\tau}$ can be computed using an auxiliary tensor network statistical model, the state itself is not defined as a tensor network state. However, there is a natural family of tensor network states $\Phi_{\tau}(\sigma)$, $\sigma \in (0, \infty)$ of the doubled system, such that in the limit $\sigma \to \infty$ one gets $\Psi_{\tau} \otimes \overline{\Psi}_{\tau}$, where $\overline{\Psi}_{\tau}$ is the state obtained by complex conjugation.

Let us first define $\Phi_{\tau}(\sigma)$ using the correlation functions of CFT operators. Let $\tilde{\Sigma}_{\sigma}$ be the surface obtained by taking the closure of $\Im \emb_{*}$ after removing the images of $\tilde{f}(z) + \siteloc_{k}$ for each $k \in \Lambda^{\vee}$ (see Fig. \ref{fig:sigmatausigma}), where $\tilde{f}(z)$ is given by $f(z)$ with $\tau$ replaced by $\sigma$. We define $|\Phi_{\tau,\Gamma}(\sigma) \ral \in \CV_{\Gamma} \otimes \overline{\CV}_{\Gamma}$ by
\beq \label{eq:doubledstate}
\lal \Phi_{\tau, \Gamma}(\sigma)| e_{m_{\Gamma}} \bar{e}_{m_{\Gamma}} \ral = \lal \prod_{j \in \Gamma} \holeO_{\emb_j}(e_{m_j} \otimes \bar{e}_{m_j}) \ral^{\text{CFT}}_{\tilde{\Sigma}_{\sigma}}
\eeq
with the cloaking linear combination of boundary conditions along the boundary of $\tilde{\Sigma}_{\sigma}$. Alternatively, one can define the components of $|\Phi_{\tau,\Gamma}(\sigma) \ral$ as a correlator of the vertex algebra on a Riemann surface obtained by gluing two copies of $\tilde{\Sigma}_{\sigma}$ in the sector with only the states of the trivial module running through the holes.

We claim that the direct limit of the state $\Phi_{\tau,\Gamma}(\sigma)$ over $\Gamma \subset \Lambda$ exists and defines a pure state $\Phi_{\tau}(\sigma)$ on $\SA \otimes \SA$ with exponential decay of correlations. In the same way as for $\Psi_{\tau}$, one can show that this claim holds at least for small enough $\tau$ using the cluster expansion.

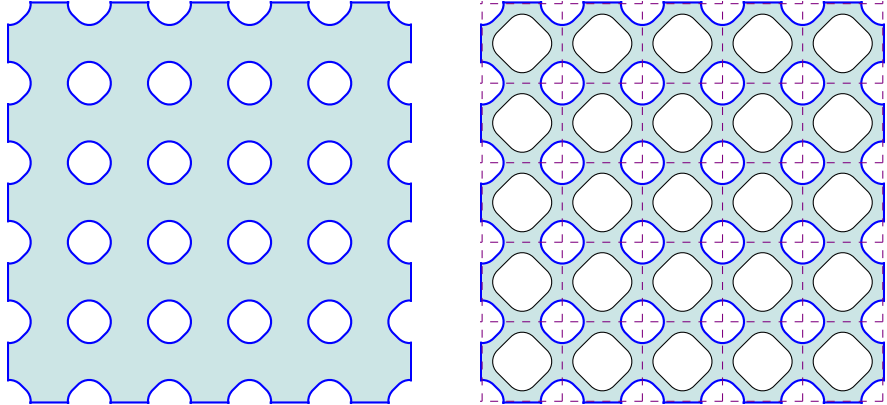
\begin{figure}
    \centering
        \begin{tikzpicture} [x=15pt,y=15pt]
    \clip (-5.05,-5.05) rectangle (5.05,5.05);
    \filldraw[color=teal, opacity = 0.2] (-5,-5) rectangle (5,5);
        \foreach \x in {-5,-3,...,5}
            \foreach \y in {-5,-3,...,5}
                \filldraw [rounded corners=6pt, rotate around={45:(\x,\y)},color = blue,fill=white, thick]  (\x-.7/1.41,\y-.7/1.41) rectangle (\x + .7/1.41,\y + .7/1.41);
        \foreach \x\y in {-4/5,-2/5,0/5,2/5,4/5}
            {
            \draw[blue,thick] (\x-0.5,\y+0.03) -- (\x+0.5,\y+0.03);
            \draw[blue,thick] (\x-0.5,-\y-0.03) -- (\x+0.5,-\y-0.03);
            \draw[blue,thick] (\y+0.03,\x-0.5) -- (\y+0.03,\x+0.5);
            \draw[blue,thick] (-\y-0.03,\x-0.5) -- (-\y-0.03,\x+0.5);
            }
    \end{tikzpicture}
    \qquad 
    \begin{tikzpicture} [x=15pt,y=15pt]
    \clip (-5.05,-5.05) rectangle (5.05,5.05);
    \filldraw[color=teal, opacity = 0.2] (-5,-5) rectangle (5,5);
        \foreach \x in {-4,-2,...,4}
            \foreach \y in {-4,-2,...,4}
                \draw [rounded corners=6pt, rotate around={45:(\x,\y)}, fill=white]  (\x-.9/1.41,\y-.9/1.41) rectangle (\x + .9/1.41,\y + .9/1.41);
        \foreach \x in {-5,-3,...,5}
            \foreach \y in {-5,-3,...,5}
                \filldraw [rounded corners=6pt, rotate around={45:(\x,\y)},color = blue,fill=white, thick]  (\x-.7/1.41,\y-.7/1.41) rectangle (\x + .7/1.41,\y + .7/1.41);
        \foreach \x\y in {-4/5,-2/5,0/5,2/5,4/5}
            {
            \draw[blue,thick] (\x-0.5,\y+0.03) -- (\x+0.5,\y+0.03);
            \draw[blue,thick] (\x-0.5,-\y-0.03) -- (\x+0.5,-\y-0.03);
            \draw[blue,thick] (\y+0.03,\x-0.5) -- (\y+0.03,\x+0.5);
            \draw[blue,thick] (-\y-0.03,\x-0.5) -- (-\y-0.03,\x+0.5);
            }
        \foreach \x in {-5,-3,...,5}
            \foreach \y in {-5,-3,...,5}
                {
                \draw [violet, dashed] (\x,\y) -- (\x-2,\y);
                \draw [violet, dashed] (\x,\y) -- (\x,\y+2);        
                }        
    \end{tikzpicture}
    \caption{The surfaces $\tilde{\Sigma}_{\sigma}$ (on the left) and $\Sigma_{\tau,\sigma}$ (on the right).}
    \label{fig:sigmatausigma}
\end{figure}

To define it as a tensor network state, let $\Sigma_{\tau,\sigma}$ be the surface obtained by removing the images of $\emb_{\Gamma}$ from $\tilde{\Sigma}_{\sigma}$ (see Fig. \ref{fig:sigmatausigma}). We can cut the surface $\Sigma_{\tau,\sigma}$ along the edges of the lattice $\Lambda^{\vee}$ to decompose it into elementary blocks. Each block gives a map $\CV \otimes \overline{\CV} \otimes \Hleg^{\otimes 4} \to \CCC$ that defines a tensor network state with the physical space being $\CV \otimes \overline{\CV}$ and with the auxiliary Hilbert space being $\Hleg$ on each leg. In Appendix \ref{app:basictensor}, we explain how one can get an expression for the components of the map in terms of the CFT correlation functions on the unit disk.

The parameter $\tau$ gives an upper bound on the correlation length of the states $\Phi_{\tau}(\sigma)$ for any $\sigma$, that follows from the auxiliary statistical model obtained by cutting along the edges of $\Lambda$ in the same way as in Section \ref{ssec:chiralstate}. When $\sigma$ is large, the presence of the holes near the sites of the dual lattice is negligible, and the state $\Phi_{\tau}(\sigma)$ is locally close to the tensor product of pure states $\Psi_{\tau}$ and $\overline{\Psi}_{\tau}$. On the contrary, when $\sigma \to 0$, the holes are large, and the individual spins become more and more disentangled. One can produce $\Phi_{\tau}(\sigma)$ from $\Psi_{\tau} \otimes \overline{\Psi}_{\tau}$ by creating "wormholes" (with only the states of the trivial module running though it) at each site of the dual lattice. As argued in Section \ref{ssec:perturb}, each wormhole can be created by an almost local unitary. The localization of such unitaries would not be affected if some wormholes have already been created. Hence, one can compose such creation processes to argue that both states are in the same topological phase. The same argument doesn't work if one tries to disentangle the state $\Phi_{\tau}(\sigma)$ by disconnecting the necks of the tensor network since the states of non-trivial modules are running through them.

The situation is special when $\voa$ is a holomorphic vertex operator algebra, i.e. $\ObRepV = \{\bf{1}\}$.
In that case, there is only a single boundary condition, and only one state running through the legs of the tensor network is not suppressed in the limit $\sigma \to 0$. The state $\Phi_{\tau}(\sigma)$ can be completely disentangled into a factorized state, and since $\Phi_{\tau}(\sigma)$ is in the same phase as $\Psi_{\tau} \otimes \overline{\Psi}_{\tau}$, the state $\Psi_{\tau}$ is invertible with the inverse being $\overline{\Psi}_{\tau}$. We conjecture that the states $\Psi_{\tau}$ for holomorphic vertex algebras with different values of $c$ are in different invertible phase.

\begin{remark}
It is believed that invertible bosonic phases in two dimensions are classified by $c \in 8 \ZZ$ \cite{Kitaevlecture}. Our construction produces a candidate for a state in a non-trivial invertible phase for any holomorphic vertex operator algebra, including a representative of Kitaev's $E_8$ phase, which is believed to be a generator of all invertible phases. It would be desirable to show that any holomorphic vertex operator algebra gives a state in the same topological phase as a stack of several $E_8$ states.
\end{remark}

\begin{remark}
While in this note we do not provide explicit parent Hamiltonians, we want to point out that given a parent Hamiltonian $H$ for a tensor network state $\Phi_{\tau}(\sigma)$ we can construct a Hamiltonian for $\Psi_{\tau}$ by applying to $H$ a unitary evolution that produces $\Phi_{\tau}(\sigma)$ from $\Psi_{\tau} \otimes \overline{\Psi}_{\tau}$ and taking the partial average of the result over $\overline{\Psi}_{\tau}$. In particular, it provides a parent gapped Hamiltonian in the invertible case (since $\Phi_{\tau}(\sigma)$ can be produced from a factorized state by a local Hamiltonian evolution and hence gapped).
\end{remark}

\subsection{Internal Lie-group symmetry}

Suppose we have a $U(1)$-invariant state of a lattice system with an on-site $U(1)$ symmetry. We say that it has no local spontaneous symmetry breaking if the perturbation by the sum of charge operators inside any region can be undone by a Hamiltonian almost localized on the boundary of that region, as explained in \cite{ThoulessHall}. More precisely, if $Q_A$ is the operator of the charge inside region $A$, then there is a Hamiltonian $K_{A \bar{A}}$ that is a sum of observables almost localized near the boundary of $A$ (with a uniform rate of decay independent of $A$), such that for any $\CA \in \SA$ we have $\lal[Q_{A},\CA]\ral_{\psi} = \lal[K_{A\bar{A}},\CA]\ral_{\psi}$. Such a 2d state has a topological invariant taking values in $H^{4}(BU(1),\RR) \cong \RR$, which  we call $U(1)$-equivariant Berry class \cite{LocalNoether}. It is locally computable and doesn't change under a local Hamiltonian evolution. When the state is a ground state of some gapped Hamiltonian, this invariant coincides with the Hall conductance. 

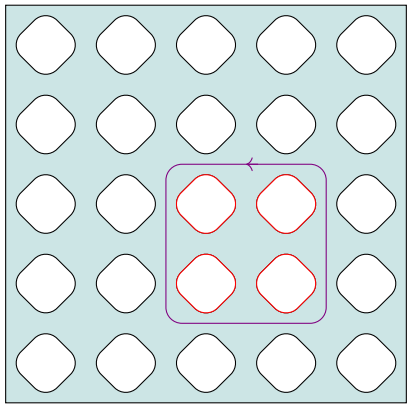
\begin{figure}
    \centering
    \begin{tikzpicture} [x=15pt,y=15pt]
        \draw [black, fill=teal, fill opacity = 0.2] (-5,-5) rectangle (5,5);
        \foreach \x in {-4,-2,...,4}
            \foreach \y in {-4,-2,...,4}
                {
                \draw [black, fill=white, rounded corners=6pt, rotate around={45:(\x,\y)}]  (\x-.9/1.41,\y-.9/1.41) rectangle (\x +.9/1.41,\y +.9/1.41);
                }
        \foreach \x in {0,2}
            \foreach \y in {0,-2}
                {
                \draw [red, fill=white, rounded corners=6pt, rotate around={45:(\x,\y)}]  (\x-.9/1.41,\y-.9/1.41) rectangle (\x +.9/1.41,\y +.9/1.41);
                }  
        \draw [violet, rounded corners=6pt]  (1-2,-1-2) rectangle (1+2,-1+2);    
        \draw [violet, <-]  (1,1) -- (1+0.3,1); 
    \end{tikzpicture}
    \caption{The action of the $U(1)$ charge on a region $A$ can be implemented by inserting $\int_{\p A} J(z) \frac{dz}{2 \pi i}$ into the defining correlator eq. (\ref{eq:chiralstate1}).}
    \label{fig:u1action}
\end{figure}

Suppose $\voa$ has a $U(1)$-symmetry at level $k$. Let $J(z)$ be the corresponding current with the operator product expansion
\beq
J(z) J(w) = \frac{k}{(z-w)^2} + ... 
\eeq
Our lattice model associated with $\voa$ naturally has the on-site $U(1)$-symmetry that corresponds to the $U(1)$ action on the vacuum module. The state $\Psi_{\tau}$ is $U(1)$-invariant. For a region $A$, let $Q_A$ be the generator of $U(1)$ action on sites inside $A$. We have
\beq
\lal \Psi_{\tau,\Gamma} | Q_{A} | e_{m_{\Gamma}}\ral = \lal \l \oint_{\p A} \frac{dz}{2 \pi i} J(z) \r \prod_{j \in \Gamma} \holeO_{\emb_j}(e_{m_j}) \ral.
\eeq
As argued in Appendix \ref{app:localpert}, there is a self-adjoint almost local observable $K(z) \in \SA$ such that the difference between the state vector for the state affected by the insertion of $J(z)$ and the vector $K(z) |\Psi_{\tau} \ral$ is proportional to $|\Psi_{\tau}\ral$. This observable can be chosen to be $U(1)$-invariant using averaging over $U(1)$ action. Since the action of $Q_A$ corresponds to the insertion of the integral of $J(z)$ along $\p A$, our state $\Psi_{\tau}$ has no local spontaneous symmetry breaking.

The topological invariant associated with $U(1)$ symmetry is given by
\beq
\eqberry = 2 \pi i \lal \left[ \int_{\text{Im} z = 0} \frac{dz}{2 \pi i}, \int_{\text{Re} w = 0} \frac{d w}{2 \pi i} \right] J(z) J(w) \ral^{\text{CFT}}_{\Sigma_{\tau}}.
\eeq
Only the singular term contributes to the integral. Since
\beq
\left[ \int_{\text{Im} z = 0} \frac{dz}{2 \pi i}, \int_{\text{Re} w = 0} \frac{dw}{2 \pi i} \right] \frac{1}{(z-w)^2} = \frac{1}{2 \pi i },
\eeq
we have $\eqberry = k$.

One can similarly consider the case of $\voa$ with $G$-symmetry for a compact Lie group $G$. Using the formalism of \cite{LocalNoether}, one can define $G$-equivariant Berry classes for $G$-invariant lattice states with no local spontaneous symmetry breaking and, in the same way as above, identify this class for the state $\Psi_{\tau}$ with the level of the $G$-current subalgebra.

\section{Examples} \label{sec:examples}

\subsection{Free fermions}

The vertex operator algebra of a single Weyl fermion is generated by the fields
\beq
\psi(z) = \sum_{n \in \ZZ} \psi_{-1/2-n} z^n,
\eeq
\beq
\bar{\psi}(z) = \sum_{n \in \ZZ} \bar{\psi}_{-1/2-n} z^n
\eeq
with $\{\bar{\psi}_n , \psi_m\} = \delta_{m+n,0}$, $\{\bar{\psi}_n,\bar{\psi}_m\} = \{\psi_n, \psi_m\} = 0$ and the operator product expansion (OPE)
\beq
\bar{\psi}(z) \psi(w) = \frac{1}{z-w} + ....
\eeq
Each site of the lattice system has infinitely many fermionic modes labeled by $r \in \ZZ+1/2$. The entanglement of modes in the state $\Psi_{\tau}$ decays with $|r|$ so that modes with large $|r|$ are almost disentangled.

We have $U(1)$ currents
\beq
J(z) = \sum_{n \in \ZZ} z^n J_{-n-1} = :\bar{\psi}(z)\psi(z):
\eeq
and the corresponding on-site $U(1)$ symmetry action for the lattice model. The $U(1)$-equivariant Berry class $\eqberry$ of $\Psi_{\tau}$ is given by the level of the $U(1)$-current algebra that, in our case, is equal to $1$.

The lattice state $\Psi_{\tau}$ is free, i.e. the average of any observable can be expressed through the two-point functions of the fermionic creation and annihilation operators using Wick's theorem. These two-point functions define a projector in the single particle Hilbert space that fully characterizes the many-body free state. The index of the projector determines whether the state is in a non-trivial phase \cite{bellissard1985, avron1994,kitaev2006anyons}. The doubled free state always has the $U(1)$ symmetry that swaps the fermionic modes, and its $\eqberry$ is given by the index of the projector of the original system. Therefore the index of the projector for $\Psi_{\tau}$ for a single Weyl fermion is $2$. 

The vertex algebra of a single Majorana-Weyl fermion is generated by
\beq
\chi(z) = \sum_{n \in \ZZ} \chi_{-1/2-n} z^n
\eeq
with $\{\chi_n , \chi_m\} = \delta_{m+n,0}$ and the OPE
\beq
 \chi(z) \chi(w) = \frac{1}{z-w} + ....
\eeq
Its double gives the Weyl fermion, and therefore the index of the projector of the state corresponding to this vertex algebra is 1.

The states $\Phi_{\tau}(\sigma)$ are also free and have the trivial index. They give a path between a factorized state and $\overline{\Psi}_{\tau} \otimes \Psi_{\tau}$ that shows invertibility of $\Psi_{\tau}$.

\subsection{Free bosons}

With every finite rank lattice $L$ equipped with a positive symmetric bilinear form $K: L \times L \to \ZZ$, one can associate the vertex operator algebra of free periodic bosons with the OPE
\beq
\phi^a(z) \phi^b(w) = - K^{ab} \log(z-w) + ...
\eeq
Depending on whether the lattice is even or odd, we can generate examples of chiral states of bosonic and fermionic lattice systems.

\subsubsection{\texorpdfstring{$E_{8}$}{Lg} state} \label{ssec:E8}

The simplest non-trivial free bosonic system that has no anyons is given by free periodic bosons $\phi^a$, $a=1,...,8$ with 
\beq
\lal \phi^a(z) \phi^b(w) \ral = -(C^{-1})^{ab} \log(z-w).
\eeq
where $(C^{-1})^{ab}$ is the inverse of the Cartan matrix of $E_8$. The corresponding state $\Psi_{\tau}$ is invertible, as discussed in Section \ref{sec:double}.

\subsubsection{\texorpdfstring{$U(1)_{m}$}{Lg} state} \label{ssec:U1}

Let us now consider the topologically ordered states produced by a single periodic boson with 
\beq
\lal \phi(z) \phi(w) \ral = - \frac{1}{m} \log(z-w)
\eeq
for $m \in 2 \ZZ$. The simple modules are labeled by $p = 0,...,m-1$. The module with label $p$ corresponds to the field $e^{i p \phi}$. 

The system has $U(1)$ symmetry $J(z) = i \p \phi$ at level $k=1/m$
\beq
J(z) J(w) = \frac{1/m}{(z-w)^2} + ...
\eeq
Therefore the resulting state realizes a fractional quantum Hall topological order with $\eqberry = 1/m$.

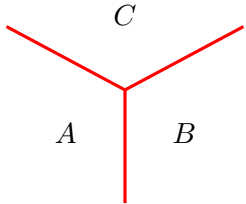
\begin{figure}
\centering
\begin{tikzpicture}[scale=.3]
\draw[red, very thick] (0,0) -- (1.5*3.4641,1.4*2);
\draw[red, very thick] (0,0) -- (-1.5*3.4641,1.4*2);
\draw[red, very thick] (0,0) -- (0,-5);

\node  at (0,1.5*2+0.3) {$C$}; 
\node  at (-1.5*1.7321,-1.5*1-0.4) {$A$};
\node  at (1.5*1.7321,-1.5*1-0.4) {$B$};

\end{tikzpicture}
\caption{The partition of the plane into three cone-like regions.}
\label{fig:ABC}
\end{figure}

For any state that has no local spontaneous $U(1)$-symmetry breaking, one can define automorphisms of the algebra $\SA$ that insert a unit of flux with the charge and the statistics defined by $\eqberry$ \cite{ThoulessHall}. Physically, such an automorphism corresponds to a finite time evolution with the Hamiltonian being a sum of almost local terms. They can be defined in the following way. Let $A$, $B$, $C$ be the partition of the plane into three cone-like regions (see Fig. \ref{fig:ABC}). The action of the charge $Q_A$ on the region $A$ can be undone by the action of some Hamiltonian that is a sum of almost local terms localized on the boundary of $A$. It can be split into two parts, $K_{AB}$ and $K_{AC}$, which are localized near the half-lines $AB$ and $AC$, respectively. The flux insertion then corresponds to the unit time evolution by the Hamiltonian $2 \pi (Q_A-K_{AB})$.

For the state $\Psi_{\tau}$, one can choose $K_{AB}$ such that the action of $(Q_A-K_{AB})$ corresponds to the insertion of $\int_{AB} \frac{dz}{2 \pi i} J(z)$ where the integral is performed along the half-line $AB$. Since $J(z) = i \p \phi$, the state with the unit of flux and the state obtained by the insertion of $e^{i \phi}$ into the triple point $ABC$ are related by an almost local unitary. Thus, both states represent the anyon with fractional charge $\eqberry = 1/m$.

\appendix

\section{Weierstrass elliptic function} \label{app:Weierstrass}

Let $\Lambda$ be the lattice $n_1 \omega_1 + n_2 \omega_2$, $n_1,n_2 \in \ZZ$ generated by $\omega_1,\omega_2 \in \CCC$. The Weierstrass elliptic function is defined by 
\beq
\wp_{\omega_1,\omega_2}(z) = \wp_{\Lambda}(z) := \frac{1}{z^2} + \sum_{\lambda \in \Lambda \backslash \{0\}} \l \frac{1}{(z-\lambda)^2} - \frac{1}{\lambda^2} \r,
\eeq
\beq
\wp'_{\omega_1,\omega_2}(z)= \wp'_{\Lambda}(z) := - \sum_{\lambda \in \Lambda } \l \frac{2}{(z-\lambda)^3} \r.
\eeq
Any elliptic function is a rational function of $\wp_{\Lambda}(z)$ and $\wp'_{\Lambda}(z)$. Some useful constants are $e_1 = \wp_{\Lambda}(\omega_1/2)$, $e_2 = \wp_{\Lambda}(\omega_2/2)$, $e_3 = \wp_{\Lambda}((\omega_1+\omega_2)/2)$ and $g_2 = -4(e_1 e_2 + e_2 e_3 + e_1 e_3) = 60 G_4$, $g_3 = 4 e_1 e_2 e_3 = 140 G_6$, where
\beq
G_k = \sum_{\lambda \in \Lambda \backslash \{0\}} \lambda^{-k}.
\eeq
We have
\beq
\wp_{\Lambda}'(z)^2 = 4 \wp_{\Lambda}(z)^3 - g_2 \wp_{\Lambda}(z) - g_3 = 4 (\wp_{\Lambda}(z)- e_1) (\wp_{\Lambda}(z)- e_2) (\wp_{\Lambda}(z)- e_3).
\eeq
The inverse of $\wp_{\Lambda}(z)$ is given by
\beq
u_{\Lambda}(w) := - \int^{\infty}_w \frac{d y}{\sqrt{4 y^3 - g_2 y - g_3}} = - \int^{\infty}_w \frac{d y}{\sqrt{4(y-e_1)(y-e_2)(y-e_3)}}.
\eeq

\section{The shape of the holes and the basic tensor} \label{app:basictensor}

Let us set $n=4$. Consider cylinders with the coordinate $w \sim w + 1$ and cuts along the half-lines $[\frac{k}{n}, \frac{k}{n} - i \infty]$, $k=0,1,2,3$. We assign such cylinders to sites of the square lattice, such that for a given site, each cut corresponds to one of the four adjacent edges, and glue them along the cuts corresponding to the same edge to produce a Riemann surface. Let $u_{\Lambda}(z)$ be the inverse of the Weierstrass function $\wp_{\Lambda}$ for the lattice $\Lambda$ and $e_1 = \wp_{\Lambda}((1+i)/ 2)$. The function $u_{\Lambda}(e_1 e^{-4 \pi i w})$ maps the Riemann surface to the plane
such that the circles with constant $\Im w$ are mapped to contours around sites of $\Lambda$ or $\Lambda^{\vee}$ depending on weather $\Im w>0$ or $\Im w<0$. The size of the region inside the contour is controlled by $|\Im w|$.

We set the function $f(z)$ from Section \ref{ssec:chiralstate} to be
\beq \label{eq:holeshape}
f(z) = u_{\Lambda}(e_1 z^{-2} e^{4 \pi \tau})
\eeq
so that the image of the Riemann surface after removal of $\Im w > \tau$ and $\Im w < (-\sigma)$ defines $\Sigma_{\tau,\sigma}$ where $\tau,\sigma>0$. The image of the region $(-\sigma) \leq \Im w \leq \tau$ on a single cylinder defines the elementary block (Fig. \ref{fig:Ttensor}). It is convenient to work with such $\Sigma_{\tau,\sigma}$ since the parameters $\tau$ and $\sigma$ have a simple geometric interpretation on the cylinder. In the limit $\sigma \to \infty$ we get the surface $\Sigma_{\tau}$.

\begin{figure}
    \centering
    \begin{tikzpicture} [x=30pt,y=30pt]
        \filldraw[color=teal!30, fill=teal!30] (-1,-1.5) rectangle (1,1.5);
        \filldraw[color=teal!30, fill=teal!30] (0,-1.5) ellipse (1 and 0.5);
        \filldraw[fill=teal!10] (0,1.5) ellipse (1 and 0.5);
        \draw[color=blue,very thick] (0,-1.5) + (180:1 and 0.5) arc   (180:360:1 and 0.5);
        \draw[violet, densely dashed] (-1,-1.5) -- (-1,0);
        \draw[violet, densely dashed] (1,-1.5) -- (1,0); 
        \draw[color=black!50, densely dashed] (0,0) + (180:1 and 0.5) arc   (180:360:1 and 0.5);        
        \filldraw[color = white, fill=white] (-.03,-2.1) rectangle (.03,-.5);
        \draw[color = violet, densely dashed, thick] (-.03,-2) -- (-.03,-.5) -- (.03,-.5) -- (.03,-2);
        \draw[<->] (-1.5,0) -- (-1.5,1.5);
        \draw[<->] (-1.5,-1.5) -- (-1.5,0);
        \node[] at (-2,0.5) {$\tau$};
        \node[] at (-2,-.5) {$\sigma$};
    \end{tikzpicture}\hspace{2cm}%
    \qquad 
    \begin{tikzpicture} [x=50pt,y=50pt]
        \clip (-1.3,-1.3) rectangle (1.3,1.3);
        \filldraw[color=violet, densely dashed, fill=teal!30,thick] (-1,-1) rectangle (1,1);
        \draw [rounded corners=25pt, rotate around={45:(0,0)}, fill=white]  (0-.9/1.41,0-.9/1.41) rectangle (0 + .9/1.41,0 + .9/1.41);
        \foreach \x in {-1,1}
            \foreach \y in {-1,1}
                \filldraw [rounded corners=20pt, rotate around={45:(\x,\y)},color = blue,fill=white, very thick]  (\x-.7/1.41,\y-.7/1.41) rectangle (\x + .7/1.41,\y + .7/1.41);
        \draw[color=black!50, densely dashed] (0,-1)--(1,0)--(0,1)--(-1,0)--(0,-1); 
        \foreach \a in {0,90,...,270}
            {
            \def\sd{1.47};            
            \filldraw[color=white, fill = white, rotate = \a] (\sd,0)--(\sd,\sd)--(0,\sd)--(\sd,0);
            }
        \node[] at (0,1.15) {$\Hleg$};
        \node[] at (0,-1.15) {$\Hleg$}; 
        \node[] at (1.15,0) {$\Hleg$}; 
        \node[] at (-1.15,0) {$\Hleg$};  
        \node[] at (0,0) {$\CV \otimes \overline{\CV}$};    
    \end{tikzpicture}
    \caption{The cylinder and its image under the map $u_{\Lambda}(e_1 e^{-4 \pi i w})$ to the unit square. The region $(-\sigma) \leq \Im w \leq \tau$ gives the elementary block.}
    \label{fig:Ttensor}
\end{figure}
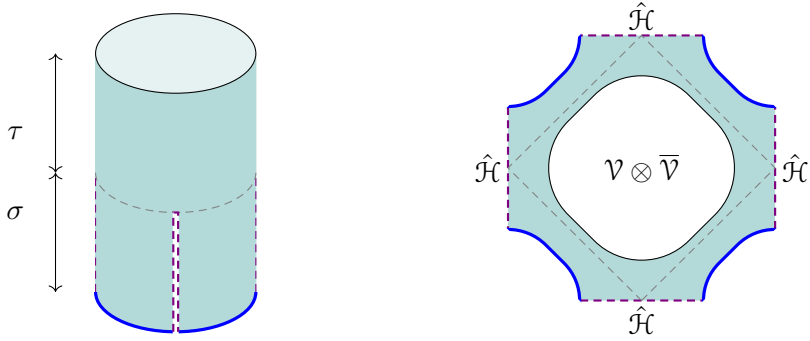

If we only remove $\Im w < (-\sigma)$, the cylinder defines the map $\lal T_{\sigma}|: \Hleg^{\otimes 4} \to \CCC$ that is the amplitude of the elementary block (with filled inner hole) with fixed states on the necks. It vanishes on basis elements $|e^{(a_1 b_1)}_{m_1} ... e^{(a_n b_n)}_{m_n} \ral$ that do not satisfy $b_k = a_{k+1}$. To express it through correlation functions of CFT on the unit disk\footnote{See Section 6 of \cite{Runkel21} for a similar computation.}, one should find a holomorphic map $w \to \xi(w)$ of the elementary block into the unit disk as shown in Fig. \ref{fig:TensorDisk} and functions $g_0(z)$,..., $g_{n-1}(z)$ such that $g_k$ maps the upper half of the unit disk into the white region near $e^{\frac{2 \pi i k}{n}}$. Then
\beq\label{eq:Ttensor1}
\lal T_{\sigma}| e^{(a_1 a_2)}_{m_1} ... e^{(a_n a_1)}_{m_n} \ral =  e^{A_L(\sigma,\tau)} (S_{{\bf 1 1}})^{3/2} \lal \biggl( \prod_{k=1}^{n} \holeO_{g_k}^{(a_k a_{k+1})}(e^{(a_k a_{k+1})}_{m_k}) \biggr) \ral^{\text{CFT}}_{D}
\eeq
where $e^{A_L}$ is the overall Liouville factor that does not depend on the boundary conditions or states and is not relevant in the following. We can take
\beq
\xi(w) = \left(\frac{1+\sqrt{\text{cosh}^2(n \pi \sigma) \tanh^2(n \pi i (w+i \sigma))-\sinh ^2(n \pi  \sigma)}}{\text{cosh}(n \pi \sigma)(1-\tanh (n \pi i (w+i \sigma)))} \right)^{1/n},
\eeq
\beq
g_k(z) = e^{\frac{2 \pi i k}{n}} \biggl( \frac{i- \tanh \l n \pi \sigma/2 \r  z}{i+ \tanh \l n \pi \sigma/2 \r  z}\biggr)^{1/n}.
\eeq
Note that when $\sigma \to 0$, the components $\lal T_{\sigma} | e^{(a_1 a_2)}_{m_1} ... e^{(a_n a_1)}_{m_n}\ral$ are suppressed by $\tanh \l n \pi \sigma/2 \r$ to the power of the total weight of basis elements, as can be seen from the rotation map eq. (\ref{eq:Rphi}). That allows us to get an upper bound on the components. More precisely, for any $\beta$ we can find ${\sigma}_0$, such that for any $0<\sigma<\sigma_0$ we have
\beq
\frac{|\lal T_{\sigma} | e^{(a_1 a_2)}_{m_1} ... e^{(a_n a_1)}_{m_n}\ral |}{|\lal  T_{\sigma} |e^{({\bf 1 1})}_{0} ... e^{({\bf 1 1})}_{0} \ral |} \leq  \exp \l - \beta \sum_{k=1}^n h^{(a_k a_{k+1})}(m_k) \r.
\eeq
The map $\lal T_{\tau}|$ defines the statistical model for the computation of averages in the state $\Psi_{\tau}$ as we discuss in Appendix \ref{app:clusterCFT}.

When $\tau$ is finite, the cylinder defines the map $\lal T_{\tau,\sigma}|: \CV \otimes \overline{\CV} \otimes \Hleg^{\otimes 4}  \to \CCC$. To express it through correlation functions of CFT on the unit disk, in addition to $g_k$, we need the function $h(z) = \xi (\frac{1}{2 \pi i}\log(z e^{-2 \pi \tau}))$ which maps the unit disk to the white region in the center in Fig. (\ref{fig:TensorDisk}). 
We have
\beq \label{eq:Ttensor}
\lal T_{\tau,\sigma} |\alpha \bar{\alpha} e^{(a_1 a_2)}_{m_1}... e^{(a_n a_1)}_{m_n}\ral  = e^{A_L(\sigma,\tau)} (S_{{\bf 1 1}})^{3/2} \lal \biggl( \prod_{k=1}^{n} \holeO_{g_k}^{(a_k a_{k+1})}(e^{(a_k a_{k+1})}_{m_k}) \biggr) \holeO_{h}(\alpha \otimes \bar{\alpha}) \ral^{\text{CFT}}_{D}
\eeq
where $|\alpha \bar{\alpha}\ral \in \CV \otimes \overline{\CV}$. The tensor network state defined using the map $\lal T_{\tau,\sigma}|$ gives the state $\Phi_{\tau}(\sigma)$.

\begin{figure}
    \centering
    \begin{tikzpicture}[x=80pt,y=80pt]
        \centering
        \filldraw[color=white, fill=teal!30] (0,0) circle (1);
        \filldraw[fill=white!30] (0,0) circle (0.25);
        \foreach \a in {0,90,...,270}
            {
            \filldraw[color = violet, densely dashed, fill=white, rotate=\a] (1,0) + (90:0.5 and 0.125) arc   (90:270:0.5 and 0.125);
            }
        \draw[color=blue, very thick] (0,0) circle (1);
        \foreach \a in {0,90,...,270}
            {
                \filldraw[white, rotate = \a] (1,0) circle (0.12);    
                \draw[color=blue, densely dashed, rotate = \a] (0,0) + (-10:1 and 1) arc (-10:10:1 and 1);
                \filldraw[black, rotate = \a] (1,0) circle (2pt);
            }
        \filldraw[black] (0,0) circle (2pt);
    \end{tikzpicture}
    \caption{The images of the maps $g_0$, $g_1$, $g_2$, $g_3$, $h$ are shown in white with the black dots being the images of 0.}
    \label{fig:TensorDisk}
\end{figure}
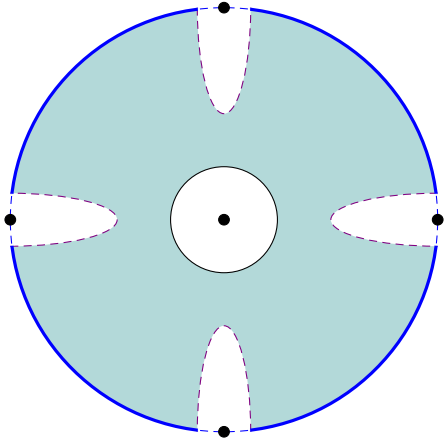

\begin{remark}
We have discussed only the maps $ \Hleg^{\otimes 4} \to \CCC$ and $\CV \otimes \overline{\CV} \otimes \Hleg^{\otimes 4} \to \CCC$ corresponding to the blocks in the interior of the network. The maps corresponding to blocks on the boundary can be defined by contracting the outward legs with some elements of $\Hleg$, corresponding to attaching the boundaries, in a straightforward way.
\end{remark}

\section{Cluster expansion} \label{app:cluster}

In this section, we first recall standard facts about the cluster expansion for statistical models, also known as the polymer expansion. We follow Ch. 5 of \cite{friedli2017statistical}. We then apply it to tensor networks coming from our construction. We will use it to prove that the states $\Psi_{\tau}$ defined in the main text are well-defined in the thermodynamic limit, at least for a small enough value of $\tau$, and that the correlators decay exponentially with the distance.

\subsection{General setup}

Let $\Gamma$ be a set equipped with a function $w:\Gamma \to \CCC$ and a binary relation that contains the diagonal. We call elements of $\Gamma$ polymers; $w(\gamma)$ is the weight of a polymer $\gamma \in \Gamma$. If $(\gamma_1,\gamma_2)$ does not belong to the binary relation, we write $\gamma_1 \cap \gamma_2 = \emptyset$ and say that the polymers do not intersect. Otherwise, we write $\gamma_1 \cap \gamma_2 \neq \emptyset$. A subset of polymers $\Gamma' \subset \Gamma$ is called disconnected if all its elements do not intersect with each other. The partition function is defined by
\beq
Z = \sum_{\Gamma'} \prod_{\gamma \in \Gamma'} w(\gamma)
\eeq
where the sum is over all disconnected subsets $\Gamma' \subset \Gamma$.

We call a collection $X$ of elements of $\Gamma$ a cluster if the graph of intersections\footnote{The graph of intersections is the graph with vertices being the elements of $X$ and edges being pairs $\{\gamma_1,\gamma_2\}$ such that $\gamma_1 \cap \gamma_2 \neq \emptyset$.} $G(X)$ of $X$ is connected. The index of a cluster $X$ is defined to be
\beq
I(X) : = \sum_{H} (-1)^{|E(H)|}
\eeq
where the sum is over all connected subgraphs $H$ of $G(X)$ and $|E(H)|$ is the number of edges in $H$. If a collection $X$ is not a cluster, we set $I(X) = 0$. Then, formally, we have
\begin{multline} \label{eq:CE}
\log Z = \sum_{n \geq 1} \sum_{\gamma_1 \in \Gamma} ... \sum_{\gamma_n \in \Gamma} \frac{1}{n!} I(\{\gamma_1,...,\gamma_n\}) \prod_{i=1}^n w(\gamma_i) = \\ = \sum_{X \in C} \frac{I(X)}{\prod_{\gamma \in \Gamma} n_X(\gamma)!} \prod_{\gamma \in X} w(\gamma) =: \sum_{X \in C} \clusterweight(X)
\end{multline}
where $n_X(\gamma)$ is the number of times the element $\gamma$ appears in $X$ and $C$ is the set of clusters. This expansion is convergent provided a certain sufficient condition is satisfied. We will use the following criterion (Theorem 5.4 from \cite{friedli2017statistical}). Suppose there is a function $|\cdot|: \Gamma \to \RR_{> 0}$ such that for any $\gamma \in \Gamma$ we have
\beq \label{eq:CEconvcrit}
\frac{1}{|\gamma|} \sum_{\gamma':\gamma \cap \gamma' \neq \emptyset} |w(\gamma')| e^{|\gamma'|} \leq 1
\eeq
and 
\beq \label{eq:CEconvcrit2}
\sum_{\gamma'} |w(\gamma')| e^{|\gamma'|} < \infty.
\eeq
Then for any $\gamma_1 \in \Gamma$ we have
\beq \label{eq:CEconvcrit3}
1+ \sum_{n\geq 2} \sum_{\gamma_2}...\sum_{\gamma_n} \frac{|I(\{\gamma_1,...,\gamma_n\})|}{(n-1)!} \prod_{i=2}^{n} |w(\gamma_i)| \leq e^{|\gamma_1|}.
\eeq
In particular, it guarantees the convergence of the cluster expansion eq. (\ref{eq:CE}).

\subsection{CFT tensor network} \label{app:clusterCFT}

\begin{figure}
    \centering
    \begin{tikzpicture} [x=15pt,y=15pt]
        \draw [blue, fill=teal, fill opacity = 0.2] (-5,-5) rectangle (5,5);
        \draw [violet, dashed] (-6,-6) -- (-6,6);
        \draw [violet, dashed] (-6,6) -- (6,6);
        \foreach \x in {-4,-2,...,4}
            \foreach \y in {-4,-2,...,4}
                {
                \draw [blue, fill=white, rounded corners=6pt, rotate around={45:(\x,\y)}]  (\x-.9/1.41,\y-.9/1.41) rectangle (\x +.9/1.41,\y +.9/1.41);
                }
        \foreach \x in {-4,-2,...,6}
            \foreach \y in {-6,-4,...,4}
                {
                \draw [violet, dashed] (\x,\y) -- (\x-2,\y);
                \draw [violet, dashed] (\x,\y) -- (\x,\y+2);        
                }
      
   
        \foreach \x in {-5,-3,...,5}
            \foreach \y in {-5,-3,...,5}
                {
                \filldraw[red] (\x,\y) circle (2pt);
                }
    \end{tikzpicture}
    \caption{Each red point is the vertex of $V_G$. The edges $E_G$ are not shown, but they connect two neighboring red points. A square containing some vertex $v$ corresponds to the piece of the surface that defines the vector $|T^{(v)}\ral \in \otimes_{e \ni v} \Hleg $.}
    \label{fig:cut1}
\end{figure}
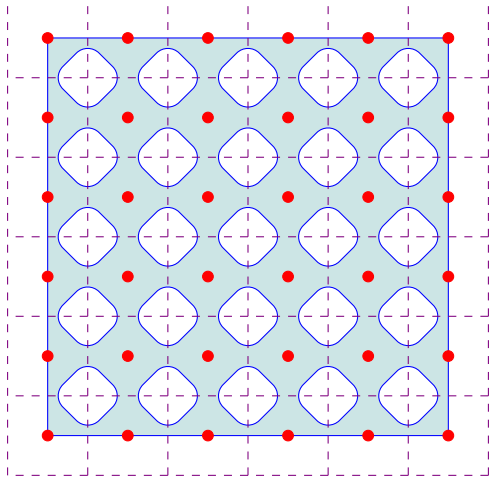

To compute the averages of observables in the state $\Psi_{\tau,\Gamma_*}$, we introduce an auxiliary statistical model. 

Let $V$, $E$, $F$ be the set of vertices, edges, and faces of the infinite square grid defined by the dual lattice $\dLat$. We choose the same orientation on each edge and face. Let $G$ be the subgraph of size $(2N) \times (2N)$ as show in Fig. (\ref{fig:cut1}). The corresponding sets of vertices, edges, and faces are denoted by $V_G$, $E_G$, $F_G$, respectively. Let's cut the surface $\Sigma_{\tau}$ into simple blocks, as shown in Fig. \ref{fig:cut1}. Each block containing a vertex $v \in V_G$ defines a map $\lal T^{(v)}| : \bigotimes_{e \ni v }\Hleg \to \CCC$ where $e \in E_G$. The components of the map can be computed as explained in Appendix \ref{app:basictensor}. The space of states of the model is $\CH_G = \bigotimes_{e \in E_G} \Hleg_e$, where $\Hleg_e \cong \Hleg$ is the space associated with each edge of $E_G$. Using a natural pairing on each $\Hleg_e$, the contracted maps $\lal T^{(v)}|$ compute the partition function of the CFT on $\Sigma_{\tau}$ with the cloaking combination of elementary boundary conditions. We label the basis elements of $\CH_{G}$ by a function $P:F \to \ObRepV$ that is constant on $f \not \in F_G$, and a function $J:E \to \NN_0$ with $J(e)=0$ for $e \not \in E_G$. A pair $(P,J)$ is called a configuration. The function $P$ defines the choice of elementary boundary conditions on each boundary, while the function $J$ defines the corresponding vectors in $\Hleg$. Hence, each configuration $(P,J)$ defines a basis element $|v,P,J \ral \in \otimes_{e \ni v} \Hleg_e$ for each vertex $v \in V_G$. We define the weight $W(P,J)$ of a configuration $(P,J)$ by the contraction of $\lal T^{(v)}|v,P,J\ral \lal v,P,J|$. The partition function of the model is given by
\beq
Z = \sum_{P,J} W(P,J).
\eeq

Let $A$ be a connected set of faces on the lattice $\Lambda$, and let $S = \{s_1,...,s_k\}$ be the set of edges of $E_G$ intersecting the boundary of $A$ (see Fig. \ref{fig:cut2}). The contraction of maps $\lal T^{(v)}|$ for $v$ that belong to faces of $A$ defines a map $\lal T_A| : \bigotimes_{s \in S} \Hleg_s \to \CCC$ or the corresponding vector $|T_A \ral \in \bigotimes_{s \in S} \Hleg_s$. Similarly, we can define $\lal T_{\bar{A}}|$ for the complementary set $\bar{A}:=F_G \backslash A$. The partition function is given by $\lal T_{\bar{A}}| T_{A}\ral$

Any local observable $\CA \in \SA$ with the support inside $A$ gives another canonical map $\lal O(\CA)| : \bigotimes_{s \in S} \Hleg_s \to \CCC$. Note that $\lal O(1)| = \lal T_A|$. The average of the observable $\CA$ can be expressed by
\beq \label{eq:averageTA}
\lal \CA \ral_{\Psi_{\tau,\Gamma_*}} = \frac{\lal T_{\bar A} | O(\CA) \ral}{\lal T_{\bar A} | T_A \ral}.
\eeq
Note that $\||O(\CA)\ral\| \leq \|\CA\| \||T_A \ral \|$.

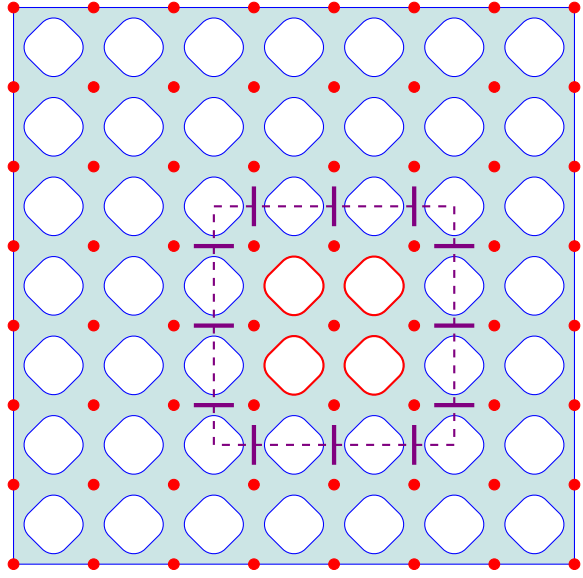
\begin{figure}
    \centering
    \begin{tikzpicture} [x=15pt,y=15pt]
        \draw [blue, fill=teal, fill opacity = 0.2] (-7,-7) rectangle (7,7);
        \foreach \x in {-6,-4,...,6}
            \foreach \y in {-6,-4,...,6}
                {
                \draw [blue, fill=white, rounded corners=6pt, rotate around={45:(\x,\y)}]  (\x-.9/1.41,\y-.9/1.41) rectangle (\x +.9/1.41,\y +.9/1.41);
                }
        \draw [violet, dashed, thick] (-2,-4) -- (-2,2) -- (4,2) -- (4,-4) -- (-2,-4);
        \foreach \x\y in {-2/1,-2/-1,-2/-3,4/1,4/-1,4/-3}
            {
            \draw[violet,ultra thick] (\x-0.5,\y) -- (\x+0.5,\y);
            }
        \foreach \x\y in {-1/2,1/2,3/2,3/-4,1/-4,-1/-4}
            {
            \draw[violet,ultra thick] (\x,\y-0.5) -- (\x,\y+0.5);
            }            
                
        \foreach \x\y in {0/0,2/0,0/-2,2/-2}
                {
                \draw [red, thick, fill=white, rounded corners=6pt, rotate around={45:(\x,\y)}]  (\x-.9/1.41,\y-.9/1.41) rectangle (\x +.9/1.41,\y +.9/1.41);                    
                }
        \foreach \x in {-7,-5,...,7}
            \foreach \y in {-7,-5,...,7}
                {
                \filldraw[red] (\x,\y) circle (2pt);
                }
    \end{tikzpicture}
    \caption{$A$ consists of four faces forming a square. The edges of $S$ are parallel to violet segments.}
    \label{fig:cut2}
\end{figure}

In the following, we analyze the model in the regime of small $\tau$. When $\tau$ is small, the system energetically prefers the configurations with constant $P$ and $J(e)=0$, and we can perform the cluster expansion around this point. We show that there is $\tau_0$ such that for any $0 < \tau < \tau_0$ the cluster expansion converges, and the correlations decay exponentially. We use the fact that for any $\beta$ there is $\tau_{\beta}$ such that for $\tau < \tau_{\beta}$ the following bound holds
\beq
\frac{|\lal T^{(v)}| e^{(a_1 a_2)}_{m_1} ... e^{(a_n a_1)}_{m_n} \ral|}{|\lal T^{(v)}| e^{(aa)}_{0} ... e^{(aa)}_{0} \ral |} \leq  \exp \l - \beta \sum_{k=1}^n h^{(a_k a_{k+1})}(m_k) \r
\eeq
as argued in Appendix \ref{app:basictensor}.

We first analyze the model for a holomorphic $\voa$, so that $I = \{ {\bf 1} \}$. Then we discuss the modifications for general $\voa$.

\subsubsection{Single vacuum}

Suppose $\voa$ is holomorphic. In this case, there is only one sector $a = {\bf 1}$, and the configurations are labeled only by functions $J$. We omit $P$ in this subsection. We also renormalize all $\lal T^{(v)}|$ so that the lowest weight components are equal to $1$. For a configuration $J$, we say that an edge $e$ is activated if $J(e) > 0$. We can apply the cluster expansion with polymers being connected sets of edges and with the weight of a polymer being the sum of $W(J)$ over configurations $J$ such that the set of activated edges coincides with the polymer.

Let $|\gamma|$ be the number of edges in $\gamma$. Then for any $e \in E_G$ we have
\beq
\sum_{\gamma \ni e} |w(\gamma)| e^{|\gamma|} \leq \sum_{l \geq 1} n_l e^{l} \l \sum_{m=1}^{\infty} e^{-2\beta h(m)} \r^{l}
\eeq
where $n_l$ is the number of connected subsets of $E_G$ intersecting $e$ of cardinality $l$. The sum in the brackets is related to the character of $\voa$, and for rational $\voa$ with modular $\text{Rep}(\voa)$ can be made arbitrarily small by varying $\beta$. Since\footnote{That can be argued as follows. For each connected subset of edges and a vertex $v$ that belongs to it, there is a path of length $2l$ that starts at $v$ and crosses each edge exactly twice. The number of all paths of length $2l$ starting at some fixed vertex of $e$ and crossing $e$ is bounded by $4^{2l}$.} $n_l \leq 4^{2l}$, for a large enough $\beta$ the cluster expansion convergence criteria eq. (\ref{eq:CEconvcrit}) and eq. (\ref{eq:CEconvcrit2}) are satisfied for any $N$. Moreover, by eq.(\ref{eq:CEconvcrit3}), for large enough $\beta$ we have
\beq
\sum_{X:\text{supp}(X) \ni e} |\CW(X)| e^{c |X|} \leq 1
\eeq
and therefore
\beq \label{eq:ClusterExpBound}
\sum_{\substack{X:  |X| \geq R \\ \text{supp}(X) \ni e}} |\CW(X)| \leq e^{- c R}
\eeq
for any $e \in E_G$ and $R > 0$, where $\CW(X)$ is the weight of the cluster defined in eq. (\ref{eq:CE}), $|X|$ is the number of edges appearing in the polymers of $X$ and $c$ is some constant.

The averages eq. (\ref{eq:averageTA}) correspond to the contraction with $(\lal T_{\bar A} | T_A \ral)^{-1} | T_{\bar A} \ral$. Let us write $| T^{(N)}_{\bar A} \ral$ in this paragraph to indicate the dependence on $N$. We can use the cluster expansion to compute the overlap of normalized vectors for consecutive $N$
\beq
\frac{\lal T^{(N+1)}_{\bar A} | T^{(N)}_{\bar A} \ral \lal T^{(N)}_{\bar A} | T^{(N+1)}_{\bar A} \ral}{\lal T^{(N+1)}_{\bar A} | T^{(N+1)}_{\bar A} \ral \lal T^{(N)}_{\bar A} | T^{(N)}_{\bar A} \ral}
\eeq
with the polymers and clusters living in the tensor networks obtained by gluing two complements of $A$ of the original networks along the edges of $S$. Note that the clusters which do not intersect simultaneously the boundary and $S$ cancel. The remaining clusters have at least as many edges as the distance between $A$ and the boundary, and eq.(\ref{eq:ClusterExpBound}) implies that their contribution decays exponentially with $N$. Similarly, the norms of $(\lal T^{(N)}_{\bar A} | T_A \ral)^{-1} \lal T^{(N)}_{\bar A} |$ are upper bounded, and since they have unit overlap with $|T_{A}\ral$, the thermodynamic limit $N \to \infty$ exists. Therefore the state $\Psi_{\tau}$ is well-defined.

Suppose now we have two observables $\CA \in \BoundedOps(\CV_{A})$ and $\CB \in \BoundedOps(\CV_{B})$ localized in disjoint collections of plaquettes $A$ and $B$ with the corresponding sets of edges $S_A$ and $S_B$ intersecting the boundaries of the regions. To find a bound on $|\lal \CA \CB \ral_{\Psi_{\tau}} - \lal \CA \ral_{\Psi_{\tau}} \lal \CB \ral_{\Psi_{\tau}}|$, we can compute the normalized overlap of vectors $|T_{\overline{AB}} \ral$ and $|T_{\bar{A}} T_{\bar{B}} \ral := |T_{\bar{A}}\ral \otimes |T_{\bar{B}} \ral$ corresponding to the contraction of $|T^{(v)}\ral$ in the complement of the corresponding regions. The cluster expansion of
\beq
\frac{\lal T_{\overline{AB}}| T_{\bar{A}} T_{\bar{B}} \ral \lal T_{\bar{A}} T_{\bar{B}} | T_{\overline{AB}} \ral}{\lal T_{\bar{A}} T_{\bar{B}}|T_{\bar{A}} T_{\bar{B}} \ral \lal T_{\overline{AB}}|T_{\overline{AB}}\ral}
\eeq
on the corresponding glued tensor networks has the contribution only from clusters intersecting both $A$ and $B$. They have at least as many edges as the shortest distance $R$ between $A$ and $B$. Therefore using eq.(\ref{eq:ClusterExpBound}) one can get
\beq
\|\rho_{AB}-\rho_{A} \otimes \rho_B\|_1 \leq C \text{min}(|S_A|,|S_B|) e^{-\alpha R}
\eeq
for some constants $C,\alpha$ and for density matrices $\rho_A$, $\rho_B$, $\rho_{AB}$ corresponding to the restrictions of the state $\Psi_{\tau}$ to the supports $A$ and $B$ of the observables $\CA$ and $\CB$, respectively. In particular, we have
\beq
|\lal  \CA \CB \ral_{\Psi_{\tau}}  - \lal \CA \ral_{\Psi_{\tau}} \lal \CB \ral_{\Psi_{\tau}}| \leq C \, \text{min}(|S_A|,|S_B|) \|\CA \| \|\CB\| e^{-\alpha R}.
\eeq Thus correlations between observable in the state $\Psi_{\tau}$ have exponential decay. Similarly, if one modifies the state $\Psi_{\tau}$ by a local insertion of vertex operators (as in Section \ref{ssec:perturb}), the change of the expectation values of observables decays exponentially with the distance between the insertions and the support of the observable.

\subsubsection{Multiple vacua}

For a non-holomorphic $\voa$, the polymers would be again the subsets of connected edges of $E_G$. For a configuration $(P,J)$, an edge $e$ is activated (i.e. the state running through the corresponding neck has a non-zero weight) if $J(e)>0$ or if $P$ has different values on the adjacent faces.  The weight of a polymer is given by the sum of $W(P,J)$ over all configurations $(P,J)$ which set of activated edges coincides with the polymer. Note that $P$ of the configurations contributing to the weight has the same value in each connected component of the partition of the plane defined by the polymer. For a given $P$, we can replace the elementary boundary condition around each hole (defined by $P$) with the trivial elementary boundary condition and the appropriate topological defect around the hole. For the computation of the weight of a given polymer $\gamma$, we can reconnect all the topological defects in deactivated necks as
the states running through those necks must have zero weight. Each reconnection of defects with label $a$ gives a factor $(S_{{\bf 1}a}/S_{{\bf 1}{\bf 1}})^{-1}$. The resulting configuration has contractible topological defects near the vertices inside each connected component (which give a factor $(S_{{\bf 1}a}/S_{{\bf 1}{\bf 1}})$) and topological defects running along the edges of the polymer. That allows to show that the contribution of a collection of non-intersecting polymers to the partition function is given by the product of weights. We have a bound 
\beq
|w(\gamma)| \leq \l \sum_{a,b} \sum_{m:h^{(ab)}(m)>0} e^{- \beta h^{(ab)}(m)} \r^{|\gamma|}
\eeq
that can be made arbitrarily small by varying $\beta$, and therefore, in the same way as for holomorphic $\voa$, for a large enough $\beta$ the cluster expansion convergence criteria eq. (\ref{eq:CEconvcrit}) and eq. (\ref{eq:CEconvcrit2}) are satisfied for any $N$.

We can organize the cluster expansion in a similar way to show that, at least for large enough $\beta$ (or small enough $\tau$), the averages of observables $\lal\CA\ral_{\Psi_{\tau}}$ are well-defined in the thermodynamic limit, and the correlators of two local observables decay exponentially with the distance between their supports. Modifications of the state $\Psi_{\tau}$ by a local insertion of vertex operators can also be treated similarly, while for insertions of non-trivial modules (see Section \ref{ssec:anyons}) the arguments are no longer applicable since it may not be possible to contract the topological defects after the reconnections. That is consistent with the fact that the averages of observables on an annulus might be affected by an insertion of an anyon into the center even if the annulus is very large.

\section{Local perturbations} \label{app:localpert}

Suppose we have two pure states $\psi$, $\psi'$ of the quasi-local algebra $\SA$ that coincide at infinity, i.e. for any $\eps$ there is $R$, such that for any observable $\CO \in \SA$ supported outside the disk of radius $R$ with the center at 0 we have
\beq \label{eq:atinfinity}
|\lal \CO \ral_{\psi'}-\lal \CO \ral_{\psi}| \leq \eps \|\CO\|.
\eeq
By Corollary 2.6.11 \cite{bratteli2012operator}, both states can be described as vector states in the same Hilbert space (the GNS Hilbert space associated with $\psi$). Let us choose the corresponding state vectors $|\psi\ral$ and $|\psi'\ral$. By Kadison transitivity theorem, one vector can be produced from another by a unitary element $\CU$ of $\SA$. The theorem also implies that there is a self-adjoint observable $\CK \in \SA$ such that $|\psi'\ral-\CK|\psi\ral$ is proportional to $|\psi\ral$.

We say that a state $\psi$ of a lattice system has an exponential decay of mutual correlations if for any finite disjoint subsets $X,Y \subset \Lambda$ and any $\CO \in \SA_{X \cup Y}$, we have
\beq
|\lal \CO \ral_{\psi} - \lal \CO \ral_{\psi|_X \otimes \psi|_Y}| \leq C e^{-a r} \|\CO\|,
\eeq
where $r$ is the distance between $X$ and $Y$, and $C,a>0$ are some constants. Equivalently, the reduced density matrices $\rho_X, \rho_Y, \rho_{XY}$ for subsets $X$,$Y$, $X \cup Y$, respectively, satisfy
\beq
\|\rho_{XY} - \rho_X \otimes \rho_Y\|_1 \leq C e^{-a r}.
\eeq

Suppose two pure states $\psi$ and $\psi'$ have exponential decay of mutual correlations and $\eps$ in eq. (\ref{eq:atinfinity}) decays exponentially with the corresponding $R$. In this appendix, we argue that for such states both $\CU$ and $\CK$ can be chosen to be almost local, i.e. they can be approximated by local observables with the support on disks of radius $r$ with the error that decays rapidly (i.e. faster than any power) with $r$. As argued in Appendix \ref{app:clusterCFT}, at least for sufficiently small $\tau$, the state $\Psi_{\tau}$ and the state modified by a local insertion of vertex operators into the defining correlator eq. (\ref{eq:chiralstate2}) satisfy these properties.

We will use the following

\begin{lemma} \label{lma:Uapprox}
Let $\eps,\delta \in [0,1]$. Let $|\chi\ral$ and $|\chi'\ral$ be unit vectors on $\CH_A \otimes \CH_B \otimes \CH_C$, such that 1) $\||\chi'\ral - |\chi\ral\| \leq \sqrt{\eps}$, 2) $\|\rho'_{BC}-\rho_{BC}\|_{1} \leq \delta$, 3) $\|\rho_{AC} - \rho_{A} \otimes \rho_{C}\|_1 \leq \delta^2$ and $\|\rho'_{AC} - \rho'_{A} \otimes \rho'_{C}\|_1 \leq \delta^2$, where $\rho_X$ and $\rho'_X$ are density matrices for the restrictions of $|\chi\ral \lal \chi |$ and $|\chi'\ral \lal \chi'|$ to $X$, respectively. Then there is a unitary $\CU_{AB} \in \CB(\CH_A \otimes \CH_{B})$ such that $\|\CU_{AB}-1\| \leq 3 \sqrt{\delta} + \sqrt{\eps}$ and $\||\chi'\ral - \CU_{AB}|\chi\ral\| \leq 3 \sqrt{\delta}$.
\end{lemma}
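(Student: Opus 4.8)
The plan is to prove this by a purification argument, combining Uhlmann's theorem with the two-sided inequalities relating the trace distance and the fidelity ($1-F\le\tfrac12\|\cdot\|_1\le\sqrt{1-F^2}$), and using the three hypotheses for three distinct purposes: hypothesis (2) to produce a unitary that can be supported on $\CH_A$ (via a purification argument on the bipartition $A\,|\,BC$), hypothesis (1) to force that unitary to \emph{nearly fix} $|\chi\ral$, and hypothesis (3) to move its uncontrolled part into the buffer factor $\CH_B$ so that the final unitary is close to $1$ in operator norm.

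Concretely, I would first pass from the hypotheses to statements about reduced density matrices: from (2) and contractivity of the partial trace, $\|\rho_C-\rho'_C\|_1\le\delta$; from (1), $\|\rho_A-\rho'_A\|_1\le 2\sqrt\eps$; and from (3) together with $1-F\le\tfrac12\|\cdot\|_1$ and the Uhlmann identity $\|\,|\psi\ral-(1\otimes V)|\phi\ral\,\|=\sqrt{2-2F}$ for suitably chosen purifications, both $|\chi\ral$ and $|\chi'\ral$ are within $\delta$ (in the Hilbert-space norm, with a suitable phase) of vectors $|\tilde\chi\ral$, $|\tilde\chi'\ral$ in which the $A$--$C$ correlation is exactly destroyed, i.e.\ $|\tilde\chi\ral$ purifies $\rho_A\otimes\rho_C$ through $\CH_B$ and can be written as $|u_A\ral_{AB_1}\otimes|u_C\ral_{CB_2}$ for an orthogonal splitting of a subspace of $\CH_B$ (with $|u_A\ral$ purifying $\rho_A$ and $|u_C\ral$ purifying $\rho_C$), and similarly for $|\tilde\chi'\ral$; here one uses that in the application $\CH_B$ is infinite-dimensional, so it has room to host such purifications.

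Next I would align the three types of pieces. Since $\|\rho_C-\rho'_C\|_1\le\delta$, by continuity of purifications I can choose $|u'_C\ral$ with $\|\,|u_C\ral-|u'_C\ral\,\|\le\sqrt\delta$, absorbing any unitary on $\CH_{B_2}$ into the unitary being built; since the two global vectors are $\sqrt\eps$-close, the two splittings of $\CH_B$ can moreover be reconciled by unitaries on $\CH_B$ that can be taken $O(\sqrt\delta)$-close to $1$ (they align objects that hypothesis (1) already forces to be that close). What remains is to map $|u_A\ral$ to $|u'_A\ral$, two purifications of $\rho_A$ and $\rho'_A$ which by $\|\rho_A-\rho'_A\|_1\le 2\sqrt\eps$ are $O(\sqrt\eps)$-close; this is implemented by a unitary on $\CH_A\otimes\CH_{B_1}\subset\CH_A\otimes\CH_B$. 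Composing the alignment maps produces a unitary $\CU_{AB}\in\CB(\CH_A\otimes\CH_B)$ with $\|\CU_{AB}|\chi\ral-|\chi'\ral\|\le 3\sqrt\delta$ by the triangle inequality (three $\sqrt\delta$-contributions: the two product-form approximations and the $C$-alignment) and $\|\CU_{AB}-1\|\le 3\sqrt\delta+\sqrt\eps$, since every alignment unitary except the final $\CH_A$-rotation is $O(\sqrt\delta)$-close to $1$ and the $\CH_A$-rotation is $O(\sqrt\eps)$-close to $1$; the $\delta^2$ appearing in hypothesis (3) is there precisely so that its square root is only $\delta$ and does not degrade these bounds.

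The step I expect to be the main obstacle is the passage in the second paragraph: turning "$\rho_{AC}$ approximately factorizes" into a genuine product-form approximation of the purification whose first factor lives in a \emph{controllable} subspace of $\CH_B$, and doing so simultaneously and compatibly for $|\chi\ral$ and $|\chi'\ral$, so that the unitaries reconciling the two decompositions really can be chosen close to $1$ in operator norm rather than merely "close to $1$ weighted by the reduced density matrix". This is the familiar gap between Uhlmann's theorem — which controls a unitary only on the support of the relevant state — and an operator-norm statement; I expect to close it by truncating the spectrum of $\rho_A$ at a scale of order $\sqrt\eps$ and observing that the discarded low-weight subspace is, by hypothesis (3), decoupled from $\CH_C$ and hence can be rotated freely inside $\CH_B$ without affecting the action on $|\chi\ral$.
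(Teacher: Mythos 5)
Your overall strategy (Fuchs--van de Graaf plus Uhlmann plus product-form purifications of $\rho_A\otimes\rho_C$ through $\CH_B$) is the same as the paper's, but you diverge at the decisive step, and the divergence is exactly where the gap you flag in your last paragraph lives. You construct \emph{two independent} product approximants, $|\tilde\chi\ral=|u_A\ral_{AB_1}\otimes|u_C\ral_{B_2C}$ for $|\chi\ral$ and $|\tilde\chi'\ral=|u'_A\ral_{AB_1'}\otimes|u'_C\ral_{B_2'C}$ for $|\chi'\ral$, and then try to align them by unitaries on $\CH_A\otimes\CH_B$. This runs into two problems. First, $|u_C\ral$ and $|u'_C\ral$ purify different states $\rho_C\neq\rho'_C$ of $\CH_C$, and no unitary supported on $AB$ can change the $C$-marginal; ``choosing $|u'_C\ral$ close to $|u_C\ral$'' is in tension with $|\tilde\chi'\ral$ having to stay close to $|\chi'\ral$, and the bookkeeping reconciling these two demands is not done. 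Second, reconciling the two splittings of $\CH_B$ by unitaries that are close to $1$ \emph{in operator norm} is precisely the off-support Uhlmann problem; your proposed fix (truncating the spectrum of $\rho_A$ at scale $\sqrt\eps$ and rotating the discarded subspace freely) is not justified by hypothesis (3), which gives no operator-norm control off the support of the state.

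The paper closes this gap with a single choice that makes all the alignment unnecessary: after constructing $|\tilde\chi\ral$ purifying $\rho_A\otimes\rho_C$ with $\||\tilde\chi\ral-|\chi\ral\|\le\delta$, it chooses $|\tilde\chi'\ral$ to purify $\tilde\rho_{BC}$ --- the $BC$-marginal of $|\tilde\chi\ral$ \emph{itself} --- rather than anything built from the marginals of $|\chi'\ral$. Hypothesis (2) enters only through $\|\rho'_{BC}-\tilde\rho_{BC}\|_1\le 3\delta$, which by Fuchs--van de Graaf and Uhlmann lets one pick such a $|\tilde\chi'\ral$ with $\||\tilde\chi'\ral-|\chi'\ral\|\le\sqrt{3\delta}$. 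Since the $B_2C$-part of $\tilde\rho_{BC}=\sigma_{B_1}\otimes|u_C\ral\lal u_C|_{B_2C}$ is pure, \emph{every} purification of it through $\CH_A$ has the form $|w\ral_{AB_1}\otimes|u_C\ral_{B_2C}$ in the \emph{same} splitting of $\CH_B$; hence $|\tilde\chi\ral$ and $|\tilde\chi'\ral$ coincide exactly on $B_2C$, differ only in the $AB_1$ factor, and the rotation in the two-dimensional subspace of $\CH_A\otimes\CH_{B_1}$ spanned by $|u_A\ral$ and $|w\ral$ (extended by the identity) is a unitary $\CU_{AB}$ with $\|\CU_{AB}-1\|=\||\tilde\chi'\ral-|\tilde\chi\ral\|\le\delta+\sqrt\eps+\sqrt{3\delta}\le 3\sqrt\delta+\sqrt\eps$ and $\|\CU_{AB}|\chi\ral-|\chi'\ral\|\le\delta+\sqrt{3\delta}\le 3\sqrt\delta$. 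If you reorganize your argument around this one choice, the spectral truncation, the $C$-side alignment, and the $B$-splitting reconciliation all disappear.
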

\begin{proof}
We will repeatedly use the Fuchs van de Graaf inequalities for the fidelity $F(\rho,\sigma) = (\| \sqrt{\rho} \sqrt{\sigma}\|_1)^2$ of quantum states $\rho$, $\sigma$
\beq
1-\sqrt{F(\rho,\sigma)} \leq \frac12 \|\rho - \sigma\|_1 \leq \sqrt{1-F(\rho,\sigma)}
\eeq
and Uhlmann's theorem that identified the fidelity with the maximal overlap of purifications $|\lal \chi_{\rho}|\chi_{\sigma} \ral|^2$. We also use that $\||\phi'\ral - |\phi\ral\| \leq \sqrt{\epsilon}$ is equivalent to $\Re \lal \phi'|\phi\ral \geq 1- \frac{\epsilon}{2}$ for any unit vectors $|\phi'\ral$, $|\phi\ral$.

By the first Fuchs van de Graaf inequality and Uhlmann's theorem for $\rho_{AC}$ and $\rho_A \otimes \rho_C$, there is a purification $|\tilde{\chi} \ral$ of $\rho_A \otimes \rho_C$ such that $|\lal\tilde{\chi}|\chi \ral| \geq (1-\delta^2/2)$ and $\| |\tilde{\chi}\ral-|\chi \ral \| \leq \delta$. Let $\tilde{\rho}_X$ be the density matrix for the restriction of $|\tilde{\chi}\ral \lal \tilde{\chi}|$ to $X$. By the second Fuchs van de Graaf inequality we have $\|\tilde{\rho}_{BC} - \rho_{BC}\|_1 \leq 2 \delta$ and therefore $\|\rho'_{BC} - \tilde{\rho}_{BC}\|_1 \leq 3 \delta$. By the first Fuchs van de Graaf inequality and Uhlmann's theorem, there is a 
vector $|\tilde{\chi}' \ral$ that purifies $\tilde{\rho}_{BC}$ such that $\lal \chi'|\tilde{\chi}'\ral \geq 1 - 3 \delta/2$. Since $\|\tilde{\chi}'\ral - |\tilde{\chi}\ral\| \leq \sqrt{3 \delta} + \sqrt{\eps} + \delta \leq 3 \sqrt{\delta} + \sqrt{\eps}$ and since the states $|\tilde{\chi}\ral \lal \tilde{\chi}|$ and $|\tilde{\chi}'\ral \lal \tilde{\chi}'|$ have no correlations between $A$ and $C$ and coincide on $BC$ there is a unitary $\CU_{AB} \in \CB(\CH_A \otimes \CH_{B})$ such that $\|\CU_{AB}-1\| \leq 3 \sqrt{\delta} + \sqrt{\eps}$ and $\CU_{AB}|\tilde{\chi}\ral=|\tilde{\chi}'\ral$. We have $\|\CU_{AB}|\chi\ral - |\chi'\ral\| \leq \delta + \sqrt{3 \delta} \leq 3 \sqrt{\delta}$.
\end{proof}

\begin{lemma}  \label{lma:Kapprox}
Let $|\chi\ral$ be a unit vector on $\CH_{A} \otimes \CH_{B} \otimes \CH_{C}$ and $\CA \in \CB(\CH_{A})$ such that 1) $\|\rho_{AC} - \rho_A \otimes \rho_{C} \| \leq \delta^2$, 2) $\|\CA\| = 1$ and $\|\CA|\chi\ral\| \leq \eps$. Then there is a constant $c \in \CCC$ and a self-adjoint $\CK \in \CB(\CH_{BC})$ such that $\|c + \CK\| \leq  2(\eps+\delta)$ and $\|(c+\CK-\CA)|\chi\ral \| \leq \delta (1+2\eps+2\delta)$.
\end{lemma}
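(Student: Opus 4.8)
The plan is to follow the fidelity/Uhlmann strategy of Lemma \ref{lma:Uapprox}, except that here one transports the operator $\CA$ across the entanglement cut between $A$ and $BC$ rather than matching two nearby states. I would first set $c := \langle\chi|\CA|\chi\rangle = \Tr(\CA\rho_A)$, so that $|c| \le \|\CA|\chi\rangle\| \le \eps$, and introduce the vector $|v\rangle := (\CA-c)|\chi\rangle$, which is orthogonal to $|\chi\rangle$ and has $\||v\rangle\|^2 = \|\CA|\chi\rangle\|^2 - |c|^2 \le \eps^2$. It then suffices to construct a self-adjoint $\CK \in \CB(\CH_{BC})$ with $\|\CK\|$ of order $\eps+\delta$ such that $\CK|\chi\rangle$ approximates $|v\rangle$ to order $\delta$: indeed $\|c+\CK\| \le |c| + \|\CK\|$ and $\|(c+\CK-\CA)|\chi\rangle\| = \|\CK|\chi\rangle - |v\rangle\|$, so the two asserted bounds follow by bookkeeping.

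To build $\CK$ I would first pass to a state in which $A$ and $C$ are exactly decoupled. Applying the first Fuchs--van de Graaf inequality and Uhlmann's theorem to $\rho_{AC}$ and $\rho_A\otimes\rho_C$ (exactly as in the proof of Lemma \ref{lma:Uapprox}) produces a unit vector $|\tilde\chi\rangle$ with $\rho^{\tilde\chi}_{AC} = \rho_A\otimes\rho_C$ and $\||\tilde\chi\rangle - |\chi\rangle\| \le \delta$. Since $\rho^{\tilde\chi}_{AC}$ is a product state, $|\tilde\chi\rangle$ factorizes through the purifying system: after an isometric identification of a subspace of $\CH_B$ with $\CH_{B_1}\otimes\CH_{B_2}$ one has $|\tilde\chi\rangle = |\phi_A\rangle_{AB_1}\otimes|\phi_C\rangle_{CB_2}$ with $|\phi_A\rangle$ a purification of $\rho_A$ and $|\phi_C\rangle$ a purification of $\rho_C$. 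Because $\CH_{B_1}\subseteq\CH_{BC}$, it is enough to build $\CK$ acting only on $B_1$ (extended by the identity on $B_2C$) such that $(c+\CK)|\phi_A\rangle$ reproduces $\CA|\phi_A\rangle$ up to order $\delta$; the error of passing back from $|\tilde\chi\rangle$ to $|\chi\rangle$ then costs at most $\|\CA-(c+\CK)\|\cdot\delta \le \bigl(1+2(\eps+\delta)\bigr)\delta$, which is of the stated form.

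On $|\phi_A\rangle$ the natural construction is the transpose map: under the Hilbert--Schmidt identification $\CH_A\otimes\CH_{B_1}\cong \mathrm{HS}(\CH_A)$ taking $|\phi_A\rangle$ to $\rho_A^{1/2}$, an operator $X$ on $A$ acts by left multiplication $X\rho_A^{1/2}$ while an operator $Y$ on $B_1$ acts by right multiplication $\rho_A^{1/2}Y^{T}$, so one takes $\CK$ to be (the Hermitian part of) the transpose of $\rho_A^{-1/2}(\CA-c)\rho_A^{1/2}$, cut off to the spectral subspace of $\rho_A$ on which the eigenvalues exceed a threshold to be chosen. The main obstacle is controlling the operator norm of this transported operator: the unregularized transpose is unbounded, and one must use \emph{both} hypotheses. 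Smallness of $\CA|\chi\rangle$ gives $\sum_k q_k\|\CA|a_k\rangle\|^2 \le \eps^2$, where $q_k$ are the eigenvalues of $\rho_A$ with eigenvectors $|a_k\rangle$, so $\CA$ is already small on the high-weight directions and the spectral cutoff costs only $O(\eps+\delta)$; the decoupling $\rho_{AC}\approx\rho_A\otimes\rho_C$ (the source of the $\delta$) is what makes the factorization of $|\tilde\chi\rangle$, hence "transport onto $B_1$", available in the first place. Calibrating the cutoff threshold so that the surviving transpose has norm $O(\eps+\delta)$ while the approximation error remains $O(\delta)$, together with checking that replacing it by its Hermitian part costs only $O(\delta)$ (since $\langle\phi_A|(\CA-c)|\phi_A\rangle = O(\delta)$), is the delicate point; the remaining estimates are routine manipulations with the Fuchs--van de Graaf inequalities, just as in Lemma \ref{lma:Uapprox}.
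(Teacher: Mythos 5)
Your plan aims at the statement as literally printed, with $\CK$ acting on $\CH_{BC}$, and the ``delicate point'' you flag --- calibrating a spectral cutoff of the transported operator $\rho_A^{-1/2}(\CA-c)\rho_A^{1/2}$ so that its norm is $O(\eps+\delta)$ while the approximation error is $O(\delta)$ --- is not merely delicate but impossible in general. The part of $\CA|\chi\ral$ carried by low-weight eigenvectors of $\rho_A$ can account for the full norm $\eps$, and no small-norm operator on the purifying side can reproduce it, so the cutoff error is only $O(\eps)$, not the required $O(\delta)$. Concretely, take $\CH_C$ trivial (so the hypothesis holds with $\delta=0$), $|\chi\ral=\sqrt{1-\eps^2}\,|a_1 b_1\ral+\eps\,|a_2 b_2\ral$ and $\CA=|a_2\ral\lal a_2|$: then $\|\CA|\chi\ral\|=\eps$, but any $c+\CK$ with $\CK\in\CB(\CH_B)$ and $(c+\CK)|\chi\ral=\CA|\chi\ral$ must satisfy $(c+\CK)|b_2\ral=|b_2\ral$, hence $\|c+\CK\|\geq 1$, contradicting $\|c+\CK\|\leq 2\eps$. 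So the lemma with $\CK\in\CB(\CH_{BC})$ is false whenever $\delta\ll\eps$; the ``$BC$'' in the statement is a typo. The paper's own proof constructs $\CK\in\CB(\CH_A\otimes\CH_B)$, and that is also what the downstream application requires ($\CK_{3r}$ supported on $D_{3r}=A\cup B$, not on the complement of $D_r$).

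Once $\CK$ is allowed to act on $A$ as well, the transpose/mirror machinery is unnecessary and the argument collapses to a few lines, which is essentially what the paper does. Pass, as you do, to the purification $|\tilde\chi\ral$ of $\rho_A\otimes\rho_C$ with $\||\tilde\chi\ral-|\chi\ral\|\leq\delta$, so that $\|\CA|\tilde\chi\ral\|\leq\eps+\delta$ and $|\tilde\chi\ral=|\phi_1\ral_{AB_1}\otimes|\phi_2\ral_{B_2C}$. Set $c=\lal\phi_1|\CA|\phi_1\ral$, let $|v_\perp\ral=(\CA-c)|\phi_1\ral$ (orthogonal to $|\phi_1\ral$), and take the rank-two self-adjoint operator $\CK=|v_\perp\ral\lal\phi_1|+|\phi_1\ral\lal v_\perp|$ on $\CH_{AB_1}$, extended by the identity on $B_2C$. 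Then $(c+\CK)|\tilde\chi\ral=\CA|\tilde\chi\ral$ exactly, and $|c|^2+\|\CK\|^2=\|\CA|\phi_1\ral\|^2\leq(\eps+\delta)^2$, giving $\|c+\CK\|\leq 2(\eps+\delta)$; the decoupling of $A$ from $C$ is used only to guarantee that this rank-two operator is localized on $AB$ rather than on all of $ABC$. The final estimate $\|(c+\CK-\CA)|\chi\ral\|\leq\|c+\CK-\CA\|\,\||\tilde\chi\ral-|\chi\ral\|$ is the one you already have. Your opening reduction (choosing $c$ and splitting off the orthogonal component) is sound, but it should be performed relative to $|\tilde\chi\ral$ rather than $|\chi\ral$ for the localization to come out.
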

\begin{proof}
By the first Fuchs van de Graaf inequality and Uhlmann's theorem for $\rho_{AC}$ and $\rho_A \otimes \rho_C$, there is a purification $|\tilde{\chi} \ral$ of $\rho_A \otimes \rho_C$ such that $|\lal\tilde{\chi}|\chi \ral| \geq (1-\delta^2/2)$ and $\| |\tilde{\chi}\ral-|\chi \ral \| \leq \delta$. Then $\|\CA|\tilde{\chi}\ral\| \leq \eps + \delta$. Since $|\tilde{\chi}\ral \lal \tilde{\chi}|$ has no correlations between $A$ and $C$, there is $c \in \CCC$ and a self-adjoint $\CK \in \CB(\CH_{A} \otimes \CH_{B})$ with $|c|,\|\CK\|\leq (\eps + \delta)$ such that $\CA|\tilde{\chi}\ral = (c + \CK) |\tilde{\chi}\ral$. We have $\|c+\CK\| \leq 2(\eps+\delta)$ and $\|(c+\CK-\CA)|\chi\ral\| \leq \|(c+\CK-\CA)\| \||\tilde{\chi}\ral - |\chi\ral \| \leq \delta (1+2\eps+2\delta)$.
\end{proof}

\begin{prop}
Let $D_r$ be the disk of radius $r$ with the center at $0$ and let $C \geq 1$, $\alpha >0$. Suppose we have vectors $|\psi\ral, |\psi'\ral \in \CH$, representing unitarily equivalent pure states $\psi$ and $\psi'$, such that (1) for any $R,r>0$ satisfying $R>r$ and any $\CO \in \SAl$ with the support outside the ring $D_R \backslash D_r$, we have
\beq
|\lal \CO \ral_{\omega} - \lal \CO \ral_{\omega|_{D_r} \otimes \omega|_{D^c_R}}| \leq C e^{- \alpha (R-r)} \|\CO\|
\eeq
for both $\omega = \psi$ and $\omega = \psi'$; (2) for any $r>0$ and any $\CO \in \SAl$ with the support outside $D_r$, we have
\beq
|\lal \CO \ral_{\psi} - \lal \CO \ral_{\psi'}| \leq C e^{- \alpha r} \|\CO\|.
\eeq
Then 1) there is an almost local unitary $\CU \in \SA$ with localization depending on $C$ and $\alpha$ only such that $|\psi'\ral = \CU |\psi\ral$; 2) there is an almost local self-adjoint observable $\CK \in \SA$ with localization depending on $C$ and $\alpha$ only such that $|\psi'\ral - \CK |\psi\ral$ is proportional to $|\psi\ral$.
\end{prop}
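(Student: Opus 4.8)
The plan is to construct both $\CU$ and $\CK$ by a telescoping procedure over a sequence of length scales $R_0 < R_1 < R_2 < \cdots \to \infty$, repairing the discrepancy between $\psi$ and $\psi'$ shell by shell and invoking Lemma \ref{lma:Uapprox} (for part 1) and Lemma \ref{lma:Kapprox} (for part 2) at each scale. I would fix disks $D_{R_n}$ with $R_{n+1} = R_n + w_n$, where the buffer widths grow slowly, $w_n \sim n/\alpha$, so that the decorrelation parameter available between $D_{R_n}$ and $D_{R_{n+1}}^c$ by hypothesis (1), namely $\delta_n := \sqrt{C}\,e^{-\alpha w_n/2}$, decays geometrically in $n$ while $R_n \sim n^2/\alpha$ grows only quadratically. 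Hypothesis (2) enters through the observation that it gives $\|\rho^{\psi'}_{D_R^c}-\rho^{\psi}_{D_R^c}\|_1 \le C e^{-\alpha R}$, i.e.\ the reduced states outside any large disk are super-close.

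For part 1, I would first use hypothesis (2) together with the Fuchs--van de Graaf inequality and Uhlmann's theorem to produce a unitary $\CU_0$ supported on $D_{R_0}$ (with $R_0$ large enough that $Ce^{-\alpha R_0}$ is small) such that $\||\psi'\ral - \CU_0|\psi\ral\|<1$, treating $\CH_{D_{R_0}}$ as the purifying system for the reduced states on $D_{R_0}^c$. Set $|\psi_0\ral := \CU_0|\psi\ral$ and iterate: given $|\psi_{n-1}\ral = \CU_{n-1}\cdots\CU_0|\psi\ral$ close to $|\psi'\ral$, apply Lemma \ref{lma:Uapprox} with $|\chi\ral = |\psi_{n-1}\ral$, $|\chi'\ral = |\psi'\ral$, $A = D_{R_n}$, $B$ the buffer annulus of width $w_n$, and $C = D_{R_{n+1}}^c$. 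Its decorrelation hypotheses for $\psi'$ are hypothesis (1) verbatim; for $|\psi_{n-1}\ral$ they follow from hypothesis (1) for $\psi$ and the fact that $\CU_{n-1}\cdots\CU_0$ is supported inside $D_{R_n}$, by unitary invariance of the trace norm; the remaining condition $\|\rho^{\psi'}_{BC}-\rho^{\psi_{n-1}}_{BC}\|_1 \le \delta_n$ reduces, by the same invariance, to hypothesis (2) at scale $R_n$. The lemma returns $\CU_n$ supported on $D_{R_{n+1}}$ with $\|\CU_n-1\| \lesssim \sqrt{\delta_n}+\sqrt{\delta_{n-1}}$ and $\||\psi'\ral - \CU_n|\psi_{n-1}\ral\| \lesssim \sqrt{\delta_n}$, closing the induction. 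After finitely many coarse initial steps the corrections become summably small, so $\CU^{(N)} := \CU_N\cdots\CU_0$ converges in norm to a unitary $\CU\in\SA$ with $\CU|\psi\ral = |\psi'\ral$; since $\CU^{(N)}$ is supported on $D_{R_{N+1}}$ and $\|\CU-\CU^{(N)}\|$ decays geometrically in $N$ while $R_{N+1}\sim N^2$, $\CU$ is almost local with a localization rate depending on $C,\alpha$ only.

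For part 2, having $\CU$, set $\lambda := \lal\psi|\psi'\ral$, $P := |\psi\ral\lal\psi|$, and build $\CK = \sum_n\CK_n$ with $\CK_n$ self-adjoint, supported on $D_{R_{n+1}}$, and $\|\CK_n\|$ summable, in such a way that $(1-P)\CK^{(n)}|\psi\ral \to (1-P)|\psi'\ral$, which is exactly the assertion $|\psi'\ral - \CK|\psi\ral \in \CCC|\psi\ral$. At step $n$ the residual $|r_{n-1}\ral := (1-P)(|\psi'\ral-\CK^{(n-1)}|\psi\ral)$ equals $(1-P)(\CU-\CK^{(n-1)})|\psi\ral$; writing $\CB := \CU - \CK^{(n-1)}$, a bounded almost-local operator essentially supported on $D_{R_n}$, and $c_{\CB} := \lal\psi|\CB|\psi\ral$, one has $(\CB-c_{\CB})|\psi\ral = |r_{n-1}\ral$ of small norm. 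I would feed $\CA := \CB - c_{\CB}$ (rescaled to unit norm) into Lemma \ref{lma:Kapprox} with $A = D_{R_n}$, $B$ the buffer, $C = D_{R_{n+1}}^c$; its hypothesis is once more hypothesis (1) at this scale, and it returns a self-adjoint operator supported on $D_{R_{n+1}}$ whose difference from $\CA$, applied to $|\psi\ral$, is $O(\delta_n)$. Discarding its additive scalar gives $\CK_n$ with small norm and $(1-P)\CK_n|\psi\ral$ within $O(\delta_n)$ of $|r_{n-1}\ral$, so the residual contracts geometrically, and summation yields the desired self-adjoint almost-local $\CK\in\SA$.

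The main obstacle will be keeping the scales self-consistent: the widths $w_n$ must be large enough that $\delta_n$ beats the previous step's error (so the iteration contracts), yet $R_n = R_0 + \sum_{k<n} w_k$ must not grow so fast as to destroy the faster-than-any-power decay of the tails --- the quadratic growth $R_n\sim n^2$ against geometric $\delta_n\sim e^{-n}$ is exactly what balances this, producing a stretched-exponential ($\sim e^{-\const\sqrt{r}}$) localization. A secondary point needing care is that Lemma \ref{lma:Uapprox} presumes the two vectors already within distance $1$, which can fail at the coarsest scale if $\psi$ and $\psi'$ are globally far apart; this is precisely why the construction must be seeded with the separate Uhlmann step $\CU_0$ built directly from hypothesis (2). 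One must also interpret the reduced states on the infinite complement $D_R^c$ with some care --- via a factorization of the GNS space of $\psi$ compatible with $\SA_{D_R}\otimes\SA_{D_R^c}$, or by passing through finite truncations --- which is routine given the cited operator-algebra facts.
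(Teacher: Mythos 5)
Your proposal follows essentially the same route as the paper's proof sketch: seed with a Fuchs--van de Graaf/Uhlmann step from hypothesis (2), then telescope over growing disks, invoking Lemma \ref{lma:Uapprox} for the unitary and Lemma \ref{lma:Kapprox} for the self-adjoint generator, with the decorrelation hypotheses supplied by (1) and by unitary invariance of the reduced states outside the support of the already-constructed corrections. The only real difference is your choice of scales ($R_n\sim n^2$ with linearly growing buffers, yielding stretched-exponential localization) versus the paper's geometric $R_n=3^n r$, which actually gives exponential localization rather than spoiling it; both comfortably meet the faster-than-any-power requirement, and your part-2 bookkeeping (iterating on the residual of $\CU-\CK^{(n-1)}$, after truncating it to a genuinely local operator so that Lemma \ref{lma:Kapprox} applies) is an equivalent reorganization of the paper's reduction of $\CU$ to a sum of local pieces.
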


\begin{hproof}
By the first Fuchs van de Graaf inequality and Uhlmann's theorem, for a given $\sqrt{\eps}$ we can always find $r$ (that depends on $C$, $\alpha$ and $\eps$ only) such that there is a local unitary $\CU^{(0)}$ with the support on $D_r$ satisfying $\||\psi'\ral - \CU^{(0)}|\psi\ral \| \leq \sqrt{\eps}$.

Let $A$ be $D_r$, $B$ be the annulus between $D_r$ and the complement of $D_{3 r}$ and $C$ be the complement of $D_{3 r}$. Suppose we found a unitary $\CU_r$ with the support on $A$ such that for $|\chi\ral = \CU_r |\psi\ral$ we have $\||\psi'\ral - |\chi\ral\| \leq \sqrt{\eps}$ for some $\eps < 1$. Let $\delta = C e^{-\alpha r}$. Note that vectors $|\chi\ral$ and $|\psi'\ral$ satisfy the conditions of Lemma \ref{lma:Uapprox}, that allows us to find a unitary $\CU_{3 r}$ on the disk of radius $3r$ such that $\|\CU_{3 r} - 1\| \leq 3\sqrt{\delta} + \sqrt{\eps}$ and $\||\psi'\ral - \CU_{3 r} |\chi\ral \| \leq 3 \sqrt{\delta}$.

We can take $\CU^{(0)} = \CU_r$, $\CU^{(1)} = \CU_{3 r}$ and iterate the procedure to find unitaries $\CU^{(n)}$ localized on disks $D_{R_n}$ of radius $R_n = 3^n r$ such that $\|\CU^{(n)}-1\|$ and $\||\psi'\ral - \CU^{(n)}...\CU^{(1)}\CU^{(0)} |\psi\ral \|$ decay exponentially with $R_n$. The infinite product of all such unitaries exists and gives an almost local observable $\CU$ that produces $|\psi'\ral$ from $|\psi\ral$.

Since $|\psi'\ral = \CU |\psi\ral$, and since we can represent $\CU$ as a sum of local observables with norms rapidly decaying with the size of the support, to show the existence of $\CK$, it is enough to consider the case $|\psi'\ral = \CA |\psi\ral$ for a local observable $\CA$.

Assume the support of $\CA$ is inside $D_{r}$. Suppose we have found $c_r \in \CCC$ and a self-adjoint $\CK_r$ with the support on $D_r$ such that $\|(\CA-c_r-\CK_r)|\psi\ral \| \leq \eps \|(\CA-c_r-\CK_r)\|$ for some $\eps$. Let $\delta = C e^{- \alpha r}$. Lemma \ref{lma:Kapprox} allows us to find $c_{3r} \in \CCC$ and a self-adjoint $\CK_{3r}$ with the support on $D_{3r}$ such that $\|c_{3r}+\CK_{3r}\| \leq 2(\eps+\delta)\|(\CA-c_r-\CK_r)\|$ and $\|(\CA-c_r-\CK_r-c_{3r}-\CK_{3r})|\psi\ral \| \leq \delta(1+2 \eps + 2 \delta) \|(\CA-c_r-\CK_r)\|$.

We can iterate the procedure, in the same way as for unitaries above, to find $c^{(n)}$ and $\CK^{(n)}$ on $D_{R_n}$ such that $\|c^{(n)}+\CK^{(n)}\|$ and $\|(\CA - \sum_{l=0}^{n} (c^{(l)}+\CK^{(l)}))|\psi\ral\|$ decay exponentially with $R_n$. The sum of $\CK^{(n)}$ gives a desired $\CK$.
\end{hproof}


\printbibliography

\end{document}